\newif\ifarxiv
\begin{document}

\title{Composable simultaneous purification:\\
when all communication scenarios reduce to spatial correlations}

\date{\today}

\author{Matilde Baroni\ifarxiv $^\dagger$\footnotemark \fi }
    \affiliation{Sorbonne Université, CNRS, LIP6, 4 Place Jussieu, Paris F-75005, France}
\author{Dominik Leichtle\ifarxiv $^*$\footnotemark \fi }
    \affiliation{School of Informatics, University of Edinburgh, 10 Crichton Street, EH8 9AB Edinburgh, United Kingdom}
\author{Ivan \v{S}upi\'{c}}
\affiliation{Sorbonne Université, CNRS, LIP6, 4 Place Jussieu, Paris F-75005, France}
\affiliation{Université Grenoble Alpes, CNRS, Grenoble INP, LIG, 38000 Grenoble, France}
\author{Damian Markham}
    \affiliation{Sorbonne Université, CNRS, LIP6, 4 Place Jussieu, Paris F-75005, France}
\author{Marco Túlio Quintino}
    \affiliation{Sorbonne Université, CNRS, LIP6, 4 Place Jussieu, Paris F-75005, France}

\date{\today}

\begin{abstract}
    Bell non-locality is a powerful framework to distinguish classical, quantum and post-quantum resources, which relies on non-communicating players. Under which restriction can we have the same separations, if we allow for communication?
    Non-signalling state assemblages, and the fact that they can always be simultaneously purified, turned out to be the key element to restrict the simplest bipartite communication scenario, the prepare-and-measure, to the standard bipartite Bell scenario.
    Yet, many distinctive features of quantum theory are genuinely multipartite and cannot be reduced to two-party behaviour. In this work we are interested in extending this simultaneous purification inspired result to all multipartite communication schemes.
    As a first step, we unify and extend the simultaneous purification result from states to instruments and super-instruments, which are composable structures, and open up the possibility to explore more complex communication scenarios.
    Our main contribution is to establish that arbitrary compositions of non-signalling assemblages cannot escape the standard spatial quantum Bell correlations set.
    As a consequence, any interactive quantum realization of correlations outside of this set must involve at least one signalling assemblage of quantum operations, even when the resulting correlations are non-signalling.
    
\end{abstract}

\maketitle

\ifarxiv
    \begingroup
    \renewcommand\thefootnote{$^\dagger$} \footnotetext{\href{mailto:matilde.baroni@gmail.com}{matilde.baroni@gmail.com}}
    \renewcommand\thefootnote{$^*$} \footnotetext{\href{mailto:dominik.leichtle@ed.ac.uk}{dominik.leichtle@ed.ac.uk}}
    \endgroup
\fi

\noindent\textbf{Introduction.}
Purification theorems play a crucial role in quantum information, showing that mathematical models of quantum states and operations relate to physically realisable circuits.
Examples of this include Naimark's dilation theorem, stating that positive operator-valued measures (POVMs) can always be seen as compressions of projection-valued measures (PVMs), and Stinespring's dilation theorem,
which states that any quantum channel, \emph{i.e.}, completely positive trace-preserving (CPTP) linear map, can be written as a unitary operation that makes use of an auxiliary system which is later discarded.
Since it is always possible to purify a single state or operation, an ensemble of operations can always be trivially dilated by considering the direct sum of each dilation. However, this offers no insight into the relationship between the elements.
A long standing line of research has been concerned with the problem of finding appropriate conditions for \emph{simultaneous} purification, allowing assemblages of quantum states or operations to be realised by equivalent circuits, up to measurements on auxiliary registers.
The celebrated Schrödinger–Gisin–Hughston–Jozsa–Wootters (S-G-HJW) theorem~\cite{Schrödinger_1936,gisin1989stochastic,hughston1993complete} resolves this question for quantum states; see~\cite{kirkpatrick2005schrodingerhjwtheorem} for an overview.
A direct consequence of the S-G-HJW theorem is an equivalence between Bell non-local bipartite correlations and correlations arising in prepare-and-measure scenarios with preparation equivalences~\cite{Wright_2023}.

An analogous characterisation for correlations arising in multipartite communication scenarios involving more than two parties, or for correlations arising in bipartite scenarios with bidirectional communication, has so far been unknown.
This is the consequence of a lack of understanding of the problem of simultaneous purification in the more general case of assemblages of quantum operations.

In this work, we again shed light on the S-G-HJW theorem, originally formulated only for states, and present its natural extension to instrument and super-instrument assemblages.
Following Schrödinger's example~\cite{Schrödinger_1936}, we do not claim priority for these theorems, partially known in the C$^*$-algebraic framework as Radon-Nikodym theorems \cite{Raginsky, Chiribella_2013-RN-supermaps}, but the permission of exploring their implications in the following sections, for they are certainly not \emph{well} known.

In particular, we propose a definition of no-signalling assemblage that unifies the well-studied case of quantum states and more general cases of quantum operations, including quantum instruments and super-instruments, and demonstrate that it is the necessary and sufficient condition for simultaneous purification.
We proceed to show that this generalisation is well-behaved under sequential composition, composition along directed acyclic graphs, and composition with indefinite causal order.
The treatment of quantum super-instruments further allows us to study communication scenarios involving loops and players with internal memory.
Our main result is to show that multipartite quantum correlations arising in arbitrary communication scenarios, where each party implements no-signalling operations, can always be mapped back to the Bell non-local setting.

As a consequence, it follows that any interactive quantum realisation of correlations outside of the Bell quantum set must involve \emph{signalling} communication, even if the correlations are non-signalling.

\vspace{5pt}
\noindent\textbf{Non-signalling assemblages.}
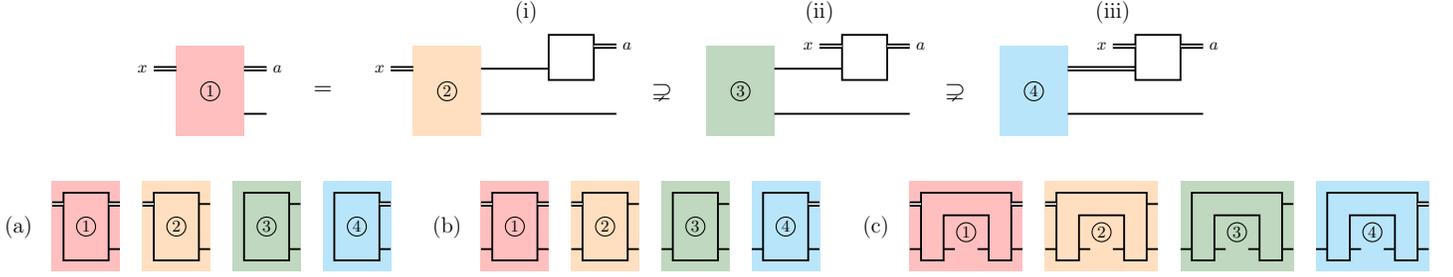
\begin{figure*}[t]
    \begin{center}
        \begin{tikzpicture}[scale=0.3, every node/.append style={scale=0.7}]

\begin{scope}[shift={(3,0)}]

\begin{scope}[shift={(-1.5,0)}]
    \fill[col1!25] (0,0) -- (3,0) -- (3,4) -- (0,4) -- cycle;
    \node[shape=circle, draw, inner sep=1pt] at (1.5,2) {1};
    \draw[double, line width=0.7pt] (0,3) -- (-1,3) node[left] {$x$};
    \draw[double, line width=0.7pt] (3,3) -- (4,3) node[right] {$a$};
    \draw[line width=0.7pt] (3,1) -- (4,1);
\end{scope}

\node at (14,5.5) {\large$(\text{i})$};
\node at (5,2) {\Large$=$};

\begin{scope}[shift={(9,0)}]
    \fill[col2!25] (0,0) -- (3,0) -- (3,4) -- (0,4) -- cycle;
    \node[shape=circle, draw, inner sep=1pt] at (1.5,2) {2};
    \draw[double, line width=0.7pt] (0,3) -- (-1,3) node[left] {$x$};
    
    \draw[line width=0.7pt] (3,3) -- (6,3);
    \draw[line width=0.7pt] (3,1) -- (9,1);

    \draw[line width=0.7pt] (6,2.5) -- (8,2.5) -- (8,4.5) -- (6,4.5) -- cycle;
    \draw[double, line width=0.7pt] (8,4) -- (9,4) node[right] {$a$};
\end{scope}

\node at (27,5.5) {\large$(\text{ii})$};
\node at (20,2) {\Large$\supsetneq$};

\begin{scope}[shift={(22,0)}]
    \fill[col3!25] (0,0) -- (3,0) -- (3,4) -- (0,4) -- cycle;
    \node[shape=circle, draw, inner sep=1pt] at (1.5,2) {3};
    
    \draw[line width=0.7pt] (3,3) -- (6,3);
    \draw[line width=0.7pt] (3,1) -- (9,1);

    \draw[line width=0.7pt] (6,2.5) -- (8,2.5) -- (8,4.5) -- (6,4.5) -- cycle;
    \draw[double, line width=0.7pt] (8,4) -- (9,4) node[right] {$a$};
    \draw[double, line width=0.7pt] (6,4) -- (5,4) node[left] {$x$};
\end{scope}

\node at (40,5.5) {\large$(\text{iii})$};
\node at (33,2) {\Large$\supsetneq$};

\begin{scope}[shift={(35,0)}]
   \fill[col4!25] (0,0) -- (3,0) -- (3,4) -- (0,4) -- cycle;
   \node[shape=circle, draw, inner sep=1pt] at (1.5,2) {4};
    
    \draw[double, line width=0.7pt] (3,3) -- (6,3);
    \draw[line width=0.7pt] (3,1) -- (9,1);

    \draw[line width=0.7pt] (6,2.5) -- (8,2.5) -- (8,4.5) -- (6,4.5) -- cycle;
    \draw[double, line width=0.7pt] (8,4) -- (9,4) node[right] {$a$};
    \draw[double, line width=0.7pt] (6,4) -- (5,4) node[left] {$x$};
\end{scope}

\end{scope} 

\begin{scope}[shift={(-4,-6)}]
    \node at (-1.5,2) {\large (a)};

    \fill[col1!25] (0,0) -- (3,0) -- (3,4) -- (0,4) -- cycle;
    \fill[col2!25] (4,0) -- (7,0) -- (7,4) -- (4,4) -- cycle;
    \fill[col3!25] (8,0) -- (11,0) -- (11,4) -- (8,4) -- cycle;
    \fill[col4!25] (12,0) -- (15,0) -- (15,4) -- (12,4) -- cycle;

    \node[shape=circle, draw, inner sep=1pt] at (1.5,2) {1};
    \node[shape=circle, draw, inner sep=1pt] at (5.5,2) {2};
    \node[shape=circle, draw, inner sep=1pt] at (9.5,2) {3};
    \node[shape=circle, draw, inner sep=1pt] at (13.5,2) {4};

    \draw[line width=0.7pt](0.5,0.5) -- (2.5,0.5) -- (2.5,3.5) -- (0.5,3.5) -- cycle;
    \draw[double, line width=0.7pt] (0,3) -- (0.5,3);
    \draw[double, line width=0.7pt] (2.5,3) -- (3,3);
    \draw[line width=0.7pt] (2.5,1) -- (3,1);

    \draw[line width=0.7pt] (4.5,0.5) -- (6.5,0.5) -- (6.5,3.5) -- (4.5,3.5) -- cycle;
    \draw[double, line width=0.7pt] (4,3) -- (4.5,3);
    \draw[line width=0.7pt] (6.5,3) -- (7,3);
    \draw[line width=0.7pt] (6.5,1) -- (7,1);

    \draw[line width=0.7pt] (8.5,0.5) -- (10.5,0.5) -- (10.5,3.5) -- (8.5,3.5) -- cycle;
    \draw[line width=0.7pt] (10.5,3) -- (11,3);
    \draw[line width=0.7pt] (10.5,1) -- (11,1);

    \draw[line width=0.7pt] (12.5,0.5) -- (14.5,0.5) -- (14.5,3.5) -- (12.5,3.5) -- cycle;
    \draw[double, line width=0.7pt] (14.5,3) -- (15,3);
    \draw[line width=0.7pt] (14.5,1) -- (15,1);
\end{scope}

\begin{scope}[shift={(15,-6)}]
    \node at (-1.5,2) {\large (b)};
    
    \fill[col1!25] (0,0) -- (3,0) -- (3,4) -- (0,4) -- cycle;
    \fill[col2!25] (4,0) -- (7,0) -- (7,4) -- (4,4) -- cycle;
    \fill[col3!25] (8,0) -- (11,0) -- (11,4) -- (8,4) -- cycle;
    \fill[col4!25] (12,0) -- (15,0) -- (15,4) -- (12,4) -- cycle;

    \node[shape=circle, draw, inner sep=1pt] at (1.5,2) {1};
    \node[shape=circle, draw, inner sep=1pt] at (5.5,2) {2};
    \node[shape=circle, draw, inner sep=1pt] at (9.5,2) {3};
    \node[shape=circle, draw, inner sep=1pt] at (13.5,2) {4};

    \draw[line width=0.7pt](0.5,0.5) -- (2.5,0.5) -- (2.5,3.5) -- (0.5,3.5) -- cycle;
    \draw[double, line width=0.7pt] (0,3) -- (0.5,3);
    \draw[double, line width=0.7pt] (2.5,3) -- (3,3);
    \draw[line width=0.7pt] (2.5,1) -- (3,1);
    \draw[line width=0.7pt] (0,1) -- (0.5,1);

    \draw[line width=0.7pt] (4.5,0.5) -- (6.5,0.5) -- (6.5,3.5) -- (4.5,3.5) -- cycle;
    \draw[double, line width=0.7pt] (4,3) -- (4.5,3);
    \draw[line width=0.7pt] (6.5,3) -- (7,3);
    \draw[line width=0.7pt] (6.5,1) -- (7,1);
    \draw[line width=0.7pt] (4,1) -- (4.5,1);

    \draw[line width=0.7pt] (8.5,0.5) -- (10.5,0.5) -- (10.5,3.5) -- (8.5,3.5) -- cycle;
    \draw[line width=0.7pt] (10.5,3) -- (11,3);
    \draw[line width=0.7pt] (10.5,1) -- (11,1);
    \draw[line width=0.7pt] (8,1) -- (8.5,1);

    \draw[line width=0.7pt] (12.5,0.5) -- (14.5,0.5) -- (14.5,3.5) -- (12.5,3.5) -- cycle;
    \draw[double, line width=0.7pt] (14.5,3) -- (15,3);
    \draw[line width=0.7pt] (14.5,1) -- (15,1);
    \draw[line width=0.7pt] (12,1) -- (12.5,1);
\end{scope}

\begin{scope}[shift={(34,-6)}]
    \node at (-1.5,2) {\large (c)};
    
    \fill[col1!25] (0,0) -- (5,0) -- (5,4) -- (0,4) -- cycle;
    \fill[col2!25] (6,0) -- (11,0) -- (11,4) -- (6,4) -- cycle;
    \fill[col3!25] (12,0) -- (17,0) -- (17,4) -- (12,4) -- cycle;
    \fill[col4!25] (18,0) -- (23,0) -- (23,4) -- (18,4) -- cycle;

    \node[shape=circle, draw, inner sep=1pt] at (2.5,1.75) {1};
    \node[shape=circle, draw, inner sep=1pt] at (8.5,1.75) {2};
    \node[shape=circle, draw, inner sep=1pt] at (14.5,1.75) {3};
    \node[shape=circle, draw, inner sep=1pt] at (20.5,1.75) {4};

    \draw[line width=0.7pt] (0.5,0.5) -- (1.5,0.5) -- (1.5,2.5) -- (3.5,2.5) -- (3.5,0.5) -- (4.5,0.5) --(4.5,3.5) -- (0.5,3.5) -- cycle;
    \draw[double, line width=0.7pt] (0,3) -- (0.5,3);
    \draw[double, line width=0.7pt] (4.5,3) -- (5,3);
    \draw[line width=0.7pt] (0,1) -- (0.5,1);
    \draw[line width=0.7pt] (1.5,1) -- (2,1);
    \draw[line width=0.7pt] (3,1) -- (3.5,1);
    \draw[line width=0.7pt] (4.5,1) -- (5,1);

    \draw[line width=0.7pt] (6.5,0.5) -- (7.5,0.5) -- (7.5,2.5) -- (9.5,2.5) -- (9.5,0.5) -- (10.5,0.5) -- (10.5,3.5) -- (6.5,3.5) -- cycle;
    \draw[double, line width=0.7pt] (6,3) -- (6.5,3);
    \draw[line width=0.7pt] (10.5,3) -- (11,3);
    \draw[line width=0.7pt] (6,1) -- (6.5,1);
    \draw[line width=0.7pt] (7.5,1) -- (8,1);
    \draw[line width=0.7pt] (9,1) -- (9.5,1);
    \draw[line width=0.7pt] (10.5,1) -- (11,1);

    \draw[line width=0.7pt] (12.5,0.5) -- (13.5,0.5) -- (13.5,2.5) -- (15.5,2.5) -- (15.5,0.5) -- (16.5,0.5) -- (16.5,3.5) -- (12.5,3.5) -- cycle;
    \draw[line width=0.7pt] (16.5,3) -- (17,3);
    \draw[line width=0.7pt] (12,1) -- (12.5,1);
    \draw[line width=0.7pt] (13.5,1) -- (14,1);
    \draw[line width=0.7pt] (15,1) -- (15.5,1);
    \draw[line width=0.7pt] (16.5,1) -- (17,1);

    \draw[line width=0.7pt] (18.5,0.5) -- (19.5,0.5) -- (19.5,2.5) -- (21.5,2.5) -- (21.5,0.5) -- (22.5,0.5) -- (22.5,3.5) -- (18.5,3.5) -- cycle;
    \draw[double, line width=0.7pt] (22.5,3) -- (23,3);
    \draw[line width=0.7pt] (18,1) -- (18.5,1);
    \draw[line width=0.7pt] (19.5,1) -- (20,1);
    \draw[line width=0.7pt] (21,1) -- (21.5,1);
    \draw[line width=0.7pt] (22.5,1) -- (23,1);
\end{scope}
  
\end{tikzpicture}     \end{center}
    \caption{\justifying Decompositions for (a) state, (b) instrument, and (c) super-instrument assemblages.
        The first equality (i) is true for any valid assemblage.
        Decomposition~(ii) is valid if and only if the assemblage is non-signalling.
        The last decomposition (iii) refers to unsteerable assemblages.
        These separations reflect the separation between non-signalling, quantum, and classical sets of correlations.
        The decompositions can be generalized beyond super-instruments to arbitrary quantum objects, as shown in Appendix~\ref{app:SGHJW-proofs}.}
    \label{fig:decompositions}
\end{figure*}
In quantum mechanics, measurements are probabilistic processes, described mathematically by POVMs, i.e. a collection of positive bounded operators $M_a \in \mathsf{B}(\mathcal{H})$ labelled by a classical output $a \in \mathcal{A}$, that sum to the identity.
Often one wants to account for multiple measurement setting; this naturally leads to sets of operators $M_{a|x}$, labelled by an additional classical input $x \in \mathcal{X}$, such that each subset with a fixed $\bar{x}$, $\{M_{a|\bar{x}}\}_a$,  is a valid POVM.

It is natural to allow for the same structure for states.
A \textit{state assemblage} is a collection of sub-normalized quantum states $\{\rho_{a|x}\}_{a,x}$ on $\mathcal{H}$, such that for any fixed $\bar{x}$ the average over the outcome $a$, $\sum_{a}\rho_{a|\bar{x}} = \rho_{\bar{x}}$, is a normalised state.
We can always think of these as the collection of all post-measurement states produced by a measurement set $M_{a|\bar{x}}$ on a fixed state.

Consider the scenario where this fixed state is a shared bipartite resource $\rho_{AB}$. The first party, Alice, aims to remotely prepares the state assemblage $\rho_{a|x}$ for the second party, Bob, by performing a measurement $M_{a|x}$ on her system $A$. To generate all state assemblages allowed by quantum mechanics, the shared resource would generally have to depend on the input setting $x$ (Fig.~\ref{fig:decompositions}, decomposition~(i)).

What if only Alice has access to the variable $x$? Which assemblages can she remotely prepare for Bob?
The celebrated S-G-HJW theorem states that she can remotely prepare all assemblages that satisfy the \textit{no-signalling} property:
    \begin{equation}
        \sum_a \rho_{a|x} = \rho_x = \rho \qquad \forall x.
    \end{equation}
This equation encodes the causal requirement that the marginal state on Bob's side cannot depend on Alice's input (decomposition (ii) in Fig.~\ref{fig:decompositions}).

To be precise, the S-G-HJW theorem states that an assemblage $\{\rho_{a|x}\}_{a,x}$ on a finite-dimensional Hilbert space $\mathcal{H}$ is non-signalling if and only if 
there exist an auxiliary Hilbert space $\mathcal{H}_A$, a set of POVMs on this space $\{M_{a|x}\}_{a,x} \subseteq \mathsf{B}(\mathcal{H}_A)$ and a pure bipartite state $\ket{\psi} \in \mathcal{H}_A \otimes \mathcal{H}$ such that the assemblage can be remotely prepared:
\begin{equation}\label{eq:theorem-states}
    \sum_a \rho_{a|x} = \rho \hspace{3pt} \forall x \hspace{3pt} \iff \hspace{3pt} \rho_{a|x}= \Tr_A\left[ \left(M_{a|x} \otimes \mathds{1}\right) \ketbra{\psi}{\psi} \right].
\end{equation}
This statement can be equivalently represented graphically:
\begin{center}
    \begin{tikzpicture}[scale=0.3, every node/.append style={scale=0.7}]

\begin{scope}[shift={(0,0)}]
    \draw[line width=0.7pt](0,0) -- (2,0) -- (2,3) -- (0,3) -- cycle;
    \node at (1,1.5) {$\rho_{a|x}$};
    \draw[double, line width=0.7pt] (0,2.5) -- (-1,2.5) node[left] {$x$};
    \draw[double, line width=0.7pt] (2,2.5) -- (3,2.5) node[right] {$a$};
    \draw[line width=0.7pt] (2,0.5) -- (3,0.5) node[right] {$\rho_{a|x}$};
\end{scope}

\node at (6,1.5) {\large$=$};

\begin{scope}[shift={(8,-1)}]
    \draw[line width=0.7pt](0,0) -- (2,0) -- (2,3) -- (0,3) -- cycle;
    \node at (1,1.5) {$\psi$};
    
    \draw[line width=0.7pt] (2,2.5) -- (4,2.5);
    \node[above] at (3,2.5) {\small $A$};
    \draw[line width=0.7pt] (2,0.5) -- (7,0.5) node[right] {$\rho_{a|x}$};

    \draw[line width=0.7pt] (4,2) -- (6,2) -- (6,5) -- (4,5) -- cycle;
    \node at (5,3.5) {$M_{a|x}$};
    \draw[double, line width=0.7pt] (4,4.5) -- (3,4.5) node[left] {$x$};
    \draw[double, line width=0.7pt] (6,4.5) -- (7,4.5) node[right] {$a$};
\end{scope}

\end{tikzpicture} \end{center}
where single lines represent quantum states, double lines are classical information, and time flows from left to right.
For completeness, we report a constructive proof of this result in Appendix~\ref{app:SGHJW-proofs}.

An analogous statement holds for (probabilistic) transformations of quantum states;
this was originally proven by~\cite{BS86radon} in the C*-algebraic framework, reformulated by~\cite{Raginsky} in the language of quantum information, and later considered by~\cite{Uola_2018} in the context of non-signalling assemblages of transformations.
Such operations can be mathematically described by quantum instruments $\mathcal{I}_a$,  \emph{i.e.}, a collection of completely positive (CP) maps whose sum is also trace-preserving (TP).
As before, we are interested in sets of multiple instruments labelled by a classical input $x \in \mathcal{X}$.
An \textit{instrument assemblage} is a collection of maps $\{\mathcal{I}_{a|x}\}_{a,x} \subseteq \mathsf{CP}(\mathcal{H}_I, \mathcal{H}_O) \subseteq \operatorname{Lin}(\mathsf{B}(\mathcal{H}_I),\mathsf{B}(\mathcal{H}_O))$ such that, for each $x$, the sum
$\sum_{a}\mathcal{I}_{a|x} = \mathcal{I}_x \in \mathsf{CPTP}(\mathcal{H}_I, \mathcal{H}_O)$ is a channel.
If this channel does not depend on $x$, the assemblage is said to be 
non-signalling:
\vspace{-5pt}
\begin{equation}
    \sum_a \mathcal{I}_{a|x} = \mathcal{I}_x= \mathcal{I} \qquad \forall x.
\end{equation}
An equivalent of the S-G-HJW theorem can be formulated for instruments.
An instrument assemblage $\{\mathcal{I}_{a|x}\}_{a,x} \subseteq \mathsf{CP}(\mathcal{H}_I, \mathcal{H}_O)$ is 
non-signalling if and only if it admits a decomposition into a classical-input-independent isometric channel $\Lambda (\cdot) = V (\cdot) V^\dagger \in \mathsf{CPTP}(\mathcal{H}_I, \mathcal{H}_A \otimes \mathcal{H}_O)$, where $\mathcal{H}_A$ is an auxiliary Hilbert space and $V$ an isometry, and a classical-input-dependent measurement in this auxiliary space $\{M_{a|x}\}_{a,x} \subseteq \mathsf{B}(\mathcal{H}_A)$.
In equations, this reads as:
\begin{align}
         \sum_a \mathcal{I}_{a|x} = \mathcal{I} \iff  \mathcal{I}_{a|x}(\sigma) = \Tr_A\left[\left(M_{a|x}\otimes \mathds{1}\right)\Lambda(\sigma) \right],
\end{align}
or equivalently, graphically as follows:
\begin{center}
    \begin{tikzpicture}[scale=0.3, every node/.append style={scale=0.7}]

\begin{scope}[shift={(0,0)}]
    \draw[line width=0.7pt](0,0) -- (2,0) -- (2,3) -- (0,3) -- cycle;
    \node at (1,1.5) {$\mathcal{I}_{a|x}$};
    \draw[double, line width=0.7pt] (0,2.5) -- (-1,2.5) node[left] {$x$};
    \draw[double, line width=0.7pt] (2,2.5) -- (3,2.5) node[right] {$a$};
    \draw[line width=0.7pt] (0,0.5) -- (-2,0.5) node[left] {$\sigma$};
    \node[above] at (-1,0.5) {\small $I$};
    \draw[line width=0.7pt] (2,0.5) -- (4,0.5) node[right] {$\mathcal{I}_{a|x}(\sigma)$};
    \node[above] at (3,0.5) {\small $O$};
\end{scope}

\node at (8,1.5) {\large$=$};

\begin{scope}[shift={(12,-1)}]
    \draw[line width=0.7pt](0,0) -- (2,0) -- (2,3) -- (0,3) -- cycle;
    \node at (1,1.5) {$\Lambda$};
    \draw[line width=0.7pt] (0,0.5) -- (-2,0.5) node[left] {$\sigma$};
    \node[above] at (-1,0.5) {\small $I$};
    
    \draw[line width=0.7pt] (2,2.5) -- (4,2.5);
    \node[above] at (3,2.5) {\small $A$};
    \draw[line width=0.7pt] (2,0.5) -- (7,0.5) node[right] {$\mathcal{I}_{a|x}(\sigma)$};
    \node[above] at (3,0.5) {\small $O$};

    \draw[line width=0.7pt] (4,2) -- (6,2) -- (6,5) -- (4,5) -- cycle;
    \node at (5,3.5) {$M_{a|x}$};
    \draw[double, line width=0.7pt] (4,4.5) -- (3,4.5) node[left] {$x$};
    \draw[double, line width=0.7pt] (6,4.5) -- (7,4.5) node[right] {$a$};
\end{scope}

\end{tikzpicture} .
\end{center}
This can be shown explicitly using the Choi–Jamiołkowski isomorphism, which provides a standard method for mapping instruments to states; for more details, consult Appendix \ref{app:SGHJW-proofs}.

Let us finally consider assemblages of super-maps and super-instruments, which are probabilistic transformations between quantum channels~\cite{Chiribella_2008,Chiribella_2009-superinstruments,taranto2025higherorderquantumoperations}.
Formally, a \textit{super-instrument assemblage} is a collection of super-maps $\{\mathcal{S}_{a|x}\}_{a,x}$, where each $\mathcal{S}_{a|x} : \operatorname{Lin}(\mathsf{B}(\mathcal{H}_{I_1}),\mathsf{B}(\mathcal{H}_{O_1})) \to \operatorname{Lin}(\mathsf{B}(\mathcal{H}_{I_2}),\mathsf{B}(\mathcal{H}_{O_2}))$ is completely CP preserving~\cite{Quintino_2019},
and such that $\sum_a \mathcal{S}_{a|x} = \mathcal{S}_x$ is a valid super-channel for every choice of $x$, \emph{i.e.}, $\mathcal{S}_x$ is TP preserving.
If the super-channels $\mathcal{S}_x $ do not depend on $x$, the assemblage is called non-signalling:
\begin{equation}
    \sum_a \mathcal{S}_{a|x} = \mathcal{S}_x=\mathcal{S} \qquad \forall x.
\end{equation}
A set of quantum super-instruments $\{\mathcal{S}_{a|x}\}_{a,x}$  on finite-dimensional Hilbert spaces is non-signalling if and only if it has a bipartite quantum realisation \cite{Chiribella_2013-RN-supermaps}, \emph{i.e.}, there exist an auxiliary Hilbert space $\mathcal{H}_A$, a POVM on this space $\{M_{a|x}\}_{a,x} \subseteq \mathsf{B}(\mathcal{H}_A)$, and a pure\footnote{We call super-channels $T : \operatorname{Lin}(\mathsf{B}(\mathcal{H}_{I_1}),\mathsf{B}(\mathcal{H}_{O_1})) \to \operatorname{Lin}(\mathsf{B}(\mathcal{H}_{I_2}),\mathsf{B}(\mathcal{H}_{O_2}))$ pure if they can be written in the form
$T(\mathcal{E})(\sigma) = W (\mathcal{E} \otimes \mathds{1}_E)(V \sigma V^\dagger) W^\dagger$
for some (finite-dimensional) auxiliary Hilbert space $\mathcal{H}_E$, and isometries
$V : \mathcal{H}_{I_2} \to \mathcal{H}_{I_1} \otimes \mathcal{H}_{E}$ and 
$W : \mathcal{H}_{O_1} \otimes \mathcal{H}_E \to \mathcal{H}_{O_2}$~\cite{Chiribella_2008,Chiribella_2009-superinstruments,araujo2017purification}.
}
super-channel $\Sigma : \operatorname{Lin}(\mathsf{B}(\mathcal{H}_{I_1}),\mathsf{B}(\mathcal{H}_{O_1})) \to \operatorname{Lin}(\mathsf{B}(\mathcal{H}_{I_2}),\mathsf{B}(\mathcal{H}_{O_2} \otimes \mathcal{H}_A))$ such that
\begin{align}
         \sum_a \mathcal{S}_{a|x} = \mathcal{S} \iff \mathcal{S}_{a|x}(\mathcal{E}) = \Tr_A\left[\left(M_{a|x}\otimes \mathds{1}\right)\Sigma(\mathcal{E}) \right],
    \end{align}
or equivalently graphically,
\begin{center}
    \begin{tikzpicture}[scale=0.23, every node/.append style={scale=0.7}]
\begin{scope}[shift={(-2,0)}]
\draw[line width=0.7pt] (1,0) -- (3,0) -- (3,3) -- (7,3) -- (7,0) -- (9,0) --(9,5) -- (1,5) -- cycle;

    \draw[double, line width=0.7pt] (1,4) -- (0,4) node[left] {$x$};
    \draw[double, line width=0.7pt] (9,4) -- (10,4) node[right] {$a$};

    \draw[line width=0.7pt] (-0.5,1) -- (1,1) node[midway, above] {\small $I_2$};
    \draw[line width=0.7pt] (3,1) -- (4.5,1) node[midway, above] {\small $I_1$};
    \draw[line width=0.7pt] (5.5,1) -- (7,1) node[midway, above] {\small $O_1$};
    \draw[line width=0.7pt] (9,1) -- (10.5,1) node[midway, above] {\small $O_2$};

    \node at (5,4) {\large$\mathcal{S}_{a|x}$};
    \end{scope} 

    \node at (11,2.5) {\large$=$};

\draw[line width=0.7pt] (14,-1.5) -- (16,-1.5) -- (16,1.5) -- (20,1.5) -- (20,-1.5) -- (22,-1.5) -- (22,3.5) -- (14,3.5) -- cycle;

\draw[line width=0.7pt] (12.5,-0.5) -- (14,-0.5) node[midway, above] {\small $I_2$};
\draw[line width=0.7pt] (16,-0.5) -- (17.5,-0.5) node[midway, above] {\small $I_1$};
\draw[line width=0.7pt] (18.5,-0.5) -- (20,-0.5) node[midway, above] {\small $O_1$};
\draw[line width=0.7pt] (22,-0.5) -- (23.5,-0.5) node[midway, above] {\small $O_2$};

\node at (18,2.5) {\large$\Sigma$};

\draw[line width=0.7pt] (22,2.5) -- (24,2.5) node[midway, above] {$A$};

\draw[line width=0.7pt] (24,1.5) -- (27,1.5) -- (27,5.5) -- (24,5.5) -- cycle;

\draw[double, line width=0.7pt] (24,4.5) -- (23,4.5) node[left] {$x$};
\draw[double, line width=0.7pt] (27,4.5) -- (28,4.5) node[right] {$a$};

\node at (25.5,3.5) {\large$M_{a|x}$};

\end{tikzpicture} .
\end{center}
To simplify the exposition, we presented the definition of the simplest super-channel: a comb with two legs. However, the definition naturally generalizes to multiple input and output channels. Using the Choi formalism, we show that the same decomposition theorem also holds in these cases, see Appendix \ref{app:SGHJW-proofs} for more details.
While the decomposition theorem for two-legged combs was already known in the literature \cite{Chiribella_2013-RN-supermaps} (even in the more general case of infinite-dimensional spaces), the generalisation to more complicated structures is our original contribution.

The careful reader may have noticed that all the results presented in this section share the exact same structure.
Consider an assemblage of a general quantum process $\{O_{a|x}\}_{a,x}$, it is said to be non-signalling if and only if
\begin{equation}\label{eq:ns_assemblage}
    \sum_a O_{a|x} = O_x = O \qquad \forall x.
\end{equation}
Assemblages satisfying this property can always be decomposed in a dilated quantum process of the same type, and a delayed measurement on the auxiliary space. Fig.~\ref{fig:decompositions} graphically illustrates this pattern, and a formal proof is presented in Appendix~\ref{app:SGHJW-proofs}.

\vspace{5pt}
\noindent\textbf{Communication scenarios and composability.}
Our interaction with the world is limited to observing correlations: probabilities of obtaining some classical outputs given some classical inputs.
Graphically, it means that experiments are diagrams without loose quantum wires.
Changing the nature of the internal wires, we can achieve different correlations. The most well studied cases are classical, quantum and non-signalling wires, where the latter encodes solely constraints from causality.
However, the introduction of communication often results in the collapse of this hierarchy.
For instance, in a two-players scenario with a one-way communication channel, a trivial classical protocol suffices to saturate the maximum correlations allowed by causality (the one-way non-signalling polytope). 

This raises the central question that motivates this letter: is no-communication necessary? Or can we impose restriction on communication scenarios such that the fundamental separations between classical, quantum and non-signalling correlation sets are preserved?

The S-G-HJW theorem for states finds a natural application in answering this question, by characterising the necessary and sufficient condition to relate non-local and communication scenarios.
This connection was known for the bipartite case, and properly formalised in \cite{Wright_2023}.
The proof becomes direct using the diagrams presented above.
If and only if the state assemblage is non-signalling, the following decomposition is valid:
\vspace*{-1em}
\begin{center}
    \begin{tikzpicture}[scale=0.27, every node/.append style={scale=0.65}]

\begin{scope}[shift={(0,0)}]
    
\draw[line width=0.7pt] (0,0) -- (2,0) -- (2,3) -- (0,3) -- cycle;
\node at (1,1.5) {\large $\rho_{a|x}$};
\draw[double, line width=0.7pt] (0,2) -- (-1,2) node[left] {$x$};
    \draw[double, line width=0.7pt] (2,2) -- (3,2) node[right] {$a$};
\draw[line width=0.7pt] (2,1) -- (6,1);

\draw[line width=0.7pt] (6,0) -- (8,0) -- (8,3) -- (6,3) -- cycle;
\node at (7,1.5) {\large $N_{b|y}$};
\draw[double, line width=0.7pt] (6,2) -- (5,2) node[left] {$y$};
    \draw[double, line width=0.7pt] (8,2) -- (9,2) node[right] {$b$};

    \end{scope} 

\node at (10,1) {\large$=$};

\begin{scope}[shift={(11.5,-1)}]

\fill[col5!10, rounded corners] (-0.5,-0.5) -- (2.5,-0.5) -- (2.5,4.5) -- (-0.5,4.5) -- cycle;

\draw[line width=0.7pt] (0,0) -- (2,0) -- (2,4) -- (0,4) -- cycle;
\node at (1,2) {\large $\psi$};
\draw[line width=0.7pt] (2,3) -- (3,3);
\draw[line width=0.7pt] (2,1) -- (7,1);
\draw[line width=0.7pt] (3,2.5) -- (5,2.5) -- (5,5.5) -- (3,5.5) -- cycle;
\node at (4,4) {\large $M^A_{a|x}$};
\draw[double, line width=0.7pt] (3,5) -- (2,5) node[left] {$x$};
    \draw[double, line width=0.7pt] (5,5) -- (6,5) node[right] {$a$};

\node at (8,1.5) {\large $N_{b|y}$};

\draw[line width=0.7pt] (7,0) -- (9,0) -- (9,3) -- (7,3) -- cycle;
\draw[double, line width=0.7pt] (7,2) -- (6,2) node[left] {$y$};
    \draw[double, line width=0.7pt] (9,2) -- (10,2) node[right] {$b$};

    \end{scope} 

\node at (23,1) {\large$=$};

    \begin{scope}[shift={(24,-1)}]

\fill[col5!10, rounded corners] (-0.5,-0.5) -- (2,-0.5) -- (2,5.5) -- (-0.5,5.5) -- cycle;

    \draw[line width=0.7pt] (0,0) -- (1.5,0) -- (1.5,5) -- (0,5) -- cycle;
     \node at (0.75,2.5) {\large $\psi$};

    \draw[line width=0.7pt] (1.5,1) -- (4.5,1);
    \draw[line width=0.7pt] (1.5,4) -- (4.5,4);

    \draw[line width=0.7pt] (4.5,3.5) -- (6.5,3.5) -- (6.5,5.5) -- (4.5,5.5) -- cycle;
    \draw[line width=0.7pt] (4.5,0.5) -- (6.5,0.5) -- (6.5,2.5) -- (4.5,2.5) -- cycle;

    \draw[double, line width=0.7pt] (4.5,5) -- (3.5,5) node[left] {$x$};
    \draw[double, line width=0.7pt] (6.5,5) -- (7.5,5) node[right] {$a$};

        \draw[double, line width=0.7pt] (4.5,2) -- (3.5,2) node[left] {$y$};
    \draw[double, line width=0.7pt] (6.5,2) -- (7.5,2) node[right] {$b$};

    \node at (5.5,4.5) {\large $M^A_{a|x}$};
    \node at (5.5,1.5) {\large $N_{b|y}$};
    
\end{scope}

\end{tikzpicture}
 \end{center}
where the right hand figure is nothing but the bipartite Bell scenario -- meaning that every such prepare-and-measure scenario has a Bell bipartite realization.
Notice that in the last equality we simply made the first quantum wire longer; this operation is always allowed.

\begin{figure*}[t]
\centering
    \begin{subfigure}[t]{0.2\textwidth}
        \centering
\begin{tikzpicture}[scale=0.65, every node/.append style={scale=1},baseline=(current bounding box.center)]{

\node[draw, minimum size=0.7cm] (A) at (0,0) {};
\node[draw, minimum size=0.7cm] (B) at (2,0) {};
\node[draw, minimum size=0.7cm] (C) at (5,0) {};

\draw[->, very thick] (A) -- (B);
\draw[->, very thick] (B) -- (3,0);
\draw[->, very thick] (4,0) -- (C);
\draw[loosely dotted, very thick] (3.2,0) -- (3.8,0);

\draw[->] ($(A.north)+(0,0.4)$) node[above] {$x_1$} -- (A.north);
\draw[->] ($(B.north)+(0,0.4)$) node[above] {$x_2$} -- (B.north);
\draw[->] ($(C.north)+(0,0.4)$) node[above] {$x_\mathsf{k}$} -- (C.north);

\draw[->] (A.south) -- ++(0,-0.4) node[below] {$a_1$};
\draw[->] (B.south) -- ++(0,-0.4) node[below] {$a_2$};
\draw[->] (C.south) -- ++(0,-0.4) node[below] {$a_\mathsf{k}$};
 };
\end{tikzpicture}
\caption{$\mathsf{k}$-sequential.}\label{fig:k-seq}
    \end{subfigure}
    \hfill
    \begin{subfigure}[t]{0.25\textwidth}
        \centering

\begin{tikzpicture}[scale=0.65, every node/.append style={scale=1},baseline=(current bounding box.center)]
{
\node[draw, minimum size=0.7cm] (A1) at (-1,1) {$1$};
\node[draw, minimum size=0.7cm] (A2) at (0,3) {$2$};

\node[draw, minimum size=0.7cm] (B1) at (2.5,2){$3$};

\node[draw, minimum size=0.7cm] (C1) at (5,0) {$6$};
\node[draw, minimum size=0.7cm] (C2) at (6,2) {$5$};
\node[draw, minimum size=0.7cm] (C3) at (5,4) {$4$};

\draw[->, very thick] (A1) -- (B1);
\draw[->, very thick] (A2) -- (B1);

\draw[->, very thick] (B1) -- (C1);
\draw[->, very thick] (B1) -- (C2);
\draw[->, very thick] (B1) -- (C3);

\draw[->, very thick] (A2) -- (C3);

\draw[->] ($(A1.north)+(0,0.4)$) node[above] {$x_1$} -- (A1.north);
\draw[->] ($(A2.north)+(0,0.4)$) node[above] {$x_2$} -- (A2.north);
\draw[->] ($(B1.north)+(0,0.4)$) node[above] {$x_3$} -- (B1.north);
\draw[->] ($(C1.north)+(0,0.4)$) node[above] {$x_6$} -- (C1.north);
\draw[->] ($(C2.north)+(0,0.4)$) node[above] {$x_5$} -- (C2.north);
\draw[->] ($(C3.north)+(0,0.4)$) node[above] {$x_4$} -- (C3.north);

\draw[->] (A1.south) -- ++(0,-0.4) node[below] {$a_1$};
\draw[->] (A2.south) -- ++(0,-0.4) node[below] {$a_2$};
\draw[->] (B1.south) -- ++(0,-0.4) node[below] {$a_3$};
\draw[->] (C1.south) -- ++(0,-0.4) node[below] {$a_6$};
\draw[->] (C2.south) -- ++(0,-0.4) node[below] {$a_5$};
\draw[->] (C3.south) -- ++(0,-0.4) node[below] {$a_4$};
 };
\end{tikzpicture}
\caption{Directed acyclic graph.}\label{fig:dag}

    \end{subfigure}
    \hfill
    \begin{subfigure}[t]{0.2\textwidth}
        \centering

\begin{tikzpicture}[scale=0.65, every node/.append style={scale=1},baseline=(current bounding box.center)]{
\fill[gray!30] (0,0) -- (0.5,0) --(0.5,6.5) -- (0,6.5) -- cycle;
    \fill[gray!30] (3.5,0) -- (4,0) --(4,6.5) -- (3.5,6.5) -- cycle;

    \fill[gray!30] (0,3) -- (4,3) --(4,3.5) -- (0,3.5) -- cycle;

    \node[draw, minimum size=0.7cm] (A) at (2,1.5) {};
   \node[draw, minimum size=0.7cm] (B) at (2,5) {};

   \draw[->] ($(A.north)+(0,0.4)$) node[above] {$x$} -- (A.north);
    \draw[->] ($(B.north)+(0,0.4)$) node[above] {$y$} -- (B.north);
    \draw[->] (A.south) -- ++(0,-0.4) node[below] {$a$};
    \draw[->] (B.south) -- ++(0,-0.4) node[below] {$b$};

    \draw[->, very thick] (0.5,1.5) -- (A);
    \draw[->, very thick] (A) -- (3.5,1.5);

    \draw[->, very thick] (0.5,5) -- (B);
    \draw[->, very thick] (B) -- (3.5,5);

 };
\end{tikzpicture}
\caption{Indefinite causal order.}\label{fig:ico}

    \end{subfigure}
    \hfill
    \begin{subfigure}[t]{0.18\textwidth}
        \centering
\begin{tikzpicture}[scale=0.65, every node/.append style={scale=1},baseline=(current bounding box.center)]{

\node[draw, minimum size=0.7cm] (A) {};
\node[draw, minimum size=0.7cm, right=of A] (B) {};

\draw[->, very thick] ($(A.east)+(0,0.2)$) to[bend left=20] ($(B.west)+(0,0.2)$);
\draw[->, very thick] ($(B.west)-(0,0.2)$) to[bend left=20] ($(A.east)-(0,0.2)$);

\draw[->] ($(A.north)+(0,0.4)$) node[above] {$x$} -- (A.north);
\draw[->] ($(B.north)+(0,0.4)$) node[above] {$y$} -- (B.north);

\draw[->] (A.south) -- ++(0,-0.4) node[below] {$a$};
\draw[->] (B.south) -- ++(0,-0.4) node[below] {$b$};
 };
\end{tikzpicture}
\caption{The simplest loop.}\label{fig:loop}

    \end{subfigure} \caption{\justifying Different communication schemes. Each party is represented by a square, the thin vertical arrows are classical inputs and outputs, and the thick arrows are quantum channels. The grey `H' in (c) represents a process matrix \cite{Oreshkov_2012}.
}\label{fig:compositions-overview}
\end{figure*}
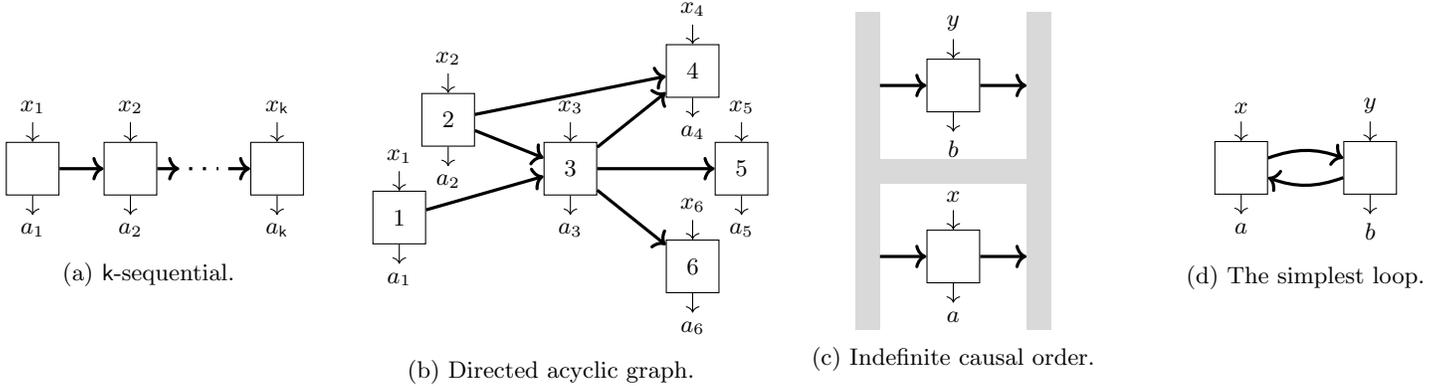

A similar result is not known for any other communication scenario.
Consider, for example, three sequential players connected by a one-way quantum channel, a prepare-transform-measure scenario.
Following the bipartite proof, one might expect that if all the states exchanged by the parties are non-signalling state assemblages, this would suffice to reconstruct a tripartite Bell realisation.
The S-G-HJW theorem allows to construct two separate bipartite models, one for each set of non-signalling state assemblages; in the first, the middle player’s action is grouped with the state preparation, in the second, it is grouped with the measurement.
However it is unclear how to combine these to obtain a Bell tripartite model.
Even more, the existence of post-quantum steering \cite{Sainz_2015} seems to suggest that this is not possible. The bottleneck is that non-signalling assemblages of states, and their decomposition,
are not-composable.

\setlength{\parskip}{0pt}
To overcome this, we propose to model the intermediate player as a quantum instrument, and consider non-signalling assemblages of instruments instead.
Then, we can sequentially apply the state decomposition for the first player, and the instrument decomposition for the second:
\begin{center}
\begin{tikzpicture}[scale=0.24, every node/.append style={scale=0.6}]

\begin{scope}[shift={(0,0)}]
    
\draw[line width=0.7pt] (0,0) -- (2,0) -- (2,3) -- (0,3) -- cycle;
\node at (1,1.5) {\large $\rho_{a|x}$};
\draw[double, line width=0.7pt] (0,2) -- (-1,2) node[left] {$x$};
    \draw[double, line width=0.7pt] (2,2) -- (3,2) node[right] {$a$};
\draw[line width=0.7pt] (2,1) -- (6,1);

\draw[line width=0.7pt] (6,0) -- (8,0) -- (8,3) -- (6,3) -- cycle;
\node at (7,1.5) {\large $\mathcal{I}_{b|y}$};
\draw[double, line width=0.7pt] (6,2) -- (5,2) node[left] {$y$};
    \draw[double, line width=0.7pt] (8,2) -- (9,2) node[right] {$b$};
\draw[line width=0.7pt] (8,1) -- (12,1);

\draw[line width=0.7pt] (12,0) -- (14,0) -- (14,3) -- (12,3) -- cycle;
\node at (13,1.5) {\large $N_{c|z}$};
\draw[double, line width=0.7pt] (12,2) -- (11,2) node[left] {$z$};
    \draw[double, line width=0.7pt] (14,2) -- (15,2) node[right] {$c$};
    
    \end{scope} 

\node at (17,1.5) {\large$=$};

\begin{scope}[shift={(19,-1)}]

\fill[col5!10, rounded corners] (-0.5,-0.5) -- (8.5,-0.5) -- (8.5,4.5) -- (5.5,4.5) -- (5.5,1.5) -- (2.5,1.5) -- (2.5,4.5) -- (-0.5,4.5) -- cycle;

\draw[line width=0.7pt] (0,0) -- (2,0) -- (2,4) -- (0,4) -- cycle;
\node at (1,2) {\large $\psi$};
\draw[line width=0.7pt] (2,3) -- (3,3);
\draw[line width=0.7pt] (2,1) -- (6,1);
\draw[line width=0.7pt] (3,2) -- (5,2) -- (5,6) -- (3,6) -- cycle;
\node at (4,4) {\large $M^A_{a|x}$};
\draw[double, line width=0.7pt] (3,5) -- (2,5) node[left] {$x$};
    \draw[double, line width=0.7pt] (5,5) -- (6,5) node[right] {$a$};

\draw[line width=0.7pt] (6,0) -- (8,0) -- (8,4) -- (6,4) -- cycle;
\node at (7,2) {\large $\Lambda$};
\draw[line width=0.7pt] (8,3) -- (9,3);
\draw[line width=0.7pt] (8,1) -- (13.5,1);
\draw[line width=0.7pt] (9,2) -- (11,2) -- (11,6) -- (9,6) -- cycle;
\node at (10,4) {\large $M^B_{b|y}$};
\draw[double, line width=0.7pt] (9,5) -- (8,5) node[left] {$y$};
    \draw[double, line width=0.7pt] (11,5) -- (12,5) node[right] {$b$};

\draw[line width=0.7pt] (13.5,0) -- (15.5,0) -- (15.5,3) -- (13.5,3) -- cycle;
\node at (14.5,1.5) {\large $N_{c|z}$};
\draw[double, line width=0.7pt] (13.5,2) -- (12.5,2) node[left] {$z$};
    \draw[double, line width=0.7pt] (15.5,2) -- (16.5,2) node[right] {$c$};

    \end{scope} 

\end{tikzpicture} \end{center}
where the state $\psi$ followed by the isometry $\Lambda$ can be interpreted as a tripartite resource. Therefore, the right-hand side of the diagram represents a tripartite Bell scenario.
The same reasoning extends recursively to a sequence of $\mathsf{k}$ players, all implementing non-signalling composable processes. The proof structure is always the same: apply the decomposition theorem to each non-signalling assemblage, delay the measurements and consider the purifications generated by the decomposition as one big source preparation.

\setlength{\parskip}{0pt}
Let us now present in more detail more complex communication scenarios, where this result also applies.
It is common to represent these graphically with a different notation, that we introduce in Fig.~\ref{fig:compositions-overview}.

To go beyond the sequential case, consider players that can receive multiple inputs and output multiple quantum systems. Such a structure can be generally described with a directed acyclic graph (DAG, as in Fig.~\ref{fig:dag}) of $\textsf{n}$ players.
As long as each preparation and transformation assemblage is non-signalling, the resulting correlations always have an $\textsf{n}$-partite Bell counterpart.

Even if the communication pattern is not described by a definite causal order (Fig.~\ref{fig:ico}), the same result holds. That is, if we consider correlations $p(ab|xy)=\Tr[W (\Tilde{\mathcal{I}}^A_{a|x} \otimes \Tilde{\mathcal{I}}^B_{b|y})] $, where $W$ is a process matrix~\cite{Oreshkov_2012}, and $\Tilde{\mathcal{I}}^A_{a|x}$ and $\Tilde{\mathcal{I}}^B_{b|y}$ are the Choi operators of a non-signalling instrument assemblages, the correlations is non-signalling~\cite{bavaresco2024indefinitecausalorderboxworld}. 
Moreover, we show that these correlations form precisely the set of quantum spatial correlations.
\begin{center}
         \centering
\begin{tikzpicture}[scale=0.25, every node/.append style={scale=0.7}]
\fill[gray!30] (0,0) -- (1,0) -- (1,13) -- (0,13) -- cycle;
\fill[gray!30] (7,0) -- (8,0) -- (8,13) -- (7,13) -- cycle;
\fill[gray!30] (0,6) -- (8,6) -- (8,7) -- (0,7) -- cycle;

\draw[line width=0.7pt] (3,1) -- (5,1) -- (5,5) -- (3,5) -- cycle;

\draw[line width=0.7pt] (3,8) -- (5,8) -- (5,12) -- (3,12) -- cycle;

  \draw[line width=0.7pt] (1,9) -- (3,9) node[midway, above] {\small$I_A$};
  \draw[line width=0.7pt] (5,9) -- (7,9) node[midway, above] {\small$O_A$};
  \draw[line width=0.7pt] (1,2) -- (3,2) node[midway, above] {\small$I_B$};
  \draw[line width=0.7pt] (5,2) -- (7,2) node[midway, above] {\small$O_B$};

  \draw[double, line width=0.7pt] (3,11) -- (2,11) node[left] {$x$};
  \draw[double, line width=0.7pt] (5,11) -- (6,11) node[right] {$a$};

  \draw[double, line width=0.7pt] (3,4) -- (2,4) node[left] {$y$};
  \draw[double, line width=0.7pt] (5,4) -- (6,4) node[right] {$b$};

    \node at (9.5, 6.5) {\large$=$};

\fill[col5!10, rounded corners] (10.5,-0.5) -- (19.5, -0.5) -- (19.5, 13.5) -- (10.5,13.5) -- cycle;

\fill[gray!30] (11,0) -- (12,0) -- (12,13) -- (11,13) -- cycle;
\fill[gray!30] (18,0) -- (19,0) -- (19,13) -- (18,13) -- cycle;
\fill[gray!30] (11,6) -- (19,6) -- (19,7) -- (11,7) -- cycle;

\draw[line width=0.7pt] (14,1) -- (16,1) -- (16,5) -- (14,5) -- cycle;
\draw[line width=0.7pt] (14,8) -- (16,8) -- (16,12) -- (14,12) -- cycle;

\draw[line width=0.7pt] (12,9) -- (14,9) node[midway, above] {\small$I_A$};
\draw[line width=0.7pt] (16,9) -- (18,9) node[midway, above] {\small$O_A$};
\draw[line width=0.7pt] (12,2) -- (14,2) node[midway, above] {\small$I_B$};
\draw[line width=0.7pt] (16,2) -- (18,2) node[midway, above] {\small$O_B$};

\draw[line width=0.7pt] (16,11) -- (21.5,11);
\draw[line width=0.7pt] (16,4) -- (21.5,4);

\draw[line width=0.7pt] (21.5,3) -- (24.5,3) -- (24.5,6) -- (21.5,6) -- cycle;
\draw[line width=0.7pt] (21.5,10) -- (24.5,10) -- (24.5,13) -- (21.5,13) -- cycle;

\draw[double, line width=0.7pt] (21.5,12) -- (20.5,12) node[left] {$x$};
\draw[double, line width=0.7pt] (24.5,12) -- (25.5,12) node[right] {$a$};

\draw[double, line width=0.7pt] (21.5,5) -- (20.5,5) node[left] {$y$};
\draw[double, line width=0.7pt] (24.5,5) -- (25.5,5) node[right] {$b$};

\node at (4,10) {\large$\mathcal{I}^A_{a|x}$};
\node at (4,3) {\large$\mathcal{I}^B_{b|y}$};

\node at (15,10) {\large$\Lambda^A$};
\node at (15,3) {\large$\Lambda^B$};

\node at (23,11.5) {\large$M^A_{a|x}$};
\node at (23,4.5) {\large$M^B_{b|y}$};

\node at (17,11.5) {\small$A_A$};
\node at (17,4.5) {\small$A_B$};
  
\end{tikzpicture}

     \end{center}
Despite their generality, process matrices still do not account for loops of communication. For instance, consider the simplest loop structure of Fig.~\ref{fig:loop}, where two players are allowed a round of back-and-forth communication, before committing to a classical output.
The most general strategy for the first player is to prepare a bipartite state conditioned on $x$, send a register to the second player and keep the other as a quantum memory, and finally perform a joint measurement on the state she receives and her memory; in our diagrammatic language it becomes evident that this is a super-instrument.
Fortunately, the concept of non-signalling assemblage extends to this object, and we have a corresponding decomposition:
    \begin{center}
        \centering
\begin{tikzpicture}[scale=0.25, every node/.append style={scale=0.7}]
\draw[line width=0.7pt] (1,0) -- (3,0) -- (3,3) -- (7,3) -- (7,0) -- (9,0) --(9,5) -- (1,5) -- cycle;

  \draw[double, line width=0.7pt] (1,4) -- (0,4) node[left] {$x$};
  \draw[double, line width=0.7pt] (9,4) -- (10,4) node[right] {$a$};

  \draw[line width=0.7pt] (3,1) -- (4,1) ;
  \draw[line width=0.7pt] (6,1) -- (7,1);

  \node at (5,4) {\large$\mathcal{S}_{a|x}$};

    \draw[line width=0.7pt] (4,-2) -- (6,-2) -- (6,2) -- (4,2) -- cycle;

      \node at (5,0) {\large$\mathcal{I}_{b|y}$};

  \draw[double, line width=0.7pt] (4,-1) -- (3,-1) node[left] {$y$};
  \draw[double, line width=0.7pt] (6,-1) -- (7,-1) node[right] {$b$};

    \node at (11.5,2.5) {\large$=$};

\fill[col5!10, rounded corners] (13.5,-2.5) -- (22.5, -2.5) -- (22.5, 5.5) -- (13.5,5.5) -- cycle;

\draw[line width=0.7pt] (14,0) -- (16,0) -- (16,3) -- (20,3) -- (20,0) -- (22,0) -- (22,5) -- (14,5) -- cycle;

\draw[line width=0.7pt] (16,1) -- (17,1);
\draw[line width=0.7pt] (19,1) -- (20,1);

\node at (18,4) {\large$\Sigma$};

\draw[line width=0.7pt] (22,4) -- (25,4);\draw[line width=0.7pt] (25,3) -- (28,3) -- (28,6) -- (25,6) -- cycle;
\draw[double, line width=0.7pt] (25,5) -- (24,5) node[left] {$x$};
\draw[double, line width=0.7pt] (28,5) -- (29,5) node[right] {$a$};
\node at (26.5,4.5) {\large$M^A_{a|x}$};

\draw[line width=0.7pt] (17,-2) -- (19,-2) -- (19,2) -- (17,2) -- cycle;
\node at (18,0) {\large$\Lambda$};

\draw[line width=0.7pt] (19,-1) -- (25,-1);
\draw[line width=0.7pt] (25,-2) -- (28,-2) -- (28,1) -- (25,1) -- cycle;
\draw[double, line width=0.7pt] (25,0) -- (24,0) node[left] {$y$};
\draw[double, line width=0.7pt] (28,0) -- (29,0) node[right] {$b$};
\node at (26.5,-0.5) {\large$M^B_{b|y}$};

\end{tikzpicture}     \end{center}
To consider more complicated communication loops, it suffices to consider more complex super-instruments (see Appendix~\ref{app:example-comm-scenarios}). 
These are capturing all possible communication scenarios.
Similar ideas are known as multi-round process matrices \cite{Hoffreumon_2021}.

\begin{theorem}\label{thm:general-composition}
    Let $\mathsf{k} \in \mathbb{N}$. The arbitrary composition of $\mathsf{k}$ non-signalling quantum assemblages
always produces correlations which admit a $\mathsf{k}$-partite Bell quantum model.
\end{theorem}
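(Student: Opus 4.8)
The plan is to apply the generic non-signalling decomposition of Eq.~\eqref{eq:ns_assemblage} (step (ii) of Fig.~\ref{fig:decompositions}) to all $\mathsf{k}$ assemblages \emph{simultaneously}, and then argue that every resulting measurement can be delayed to the very end of the diagram, where it acts as a spatially separated local measurement on a single fixed shared resource. Concretely, I would write the given composition as a closed diagram built from the assemblages $\{O^{(i)}_{a_i|x_i}\}$, $i=1,\dots,\mathsf{k}$, wired together according to the communication scheme (sequential, DAG, indefinite causal order, or loop). For each $i$, the decomposition theorem replaces $O^{(i)}_{a_i|x_i}$ by an input-independent pure dilation $\Sigma^{(i)}$ of the same type, whose output carries an auxiliary register $\mathcal{H}_{A_i}$, followed by a POVM $\{M^{(i)}_{a_i|x_i}\}\subseteq\mathsf{B}(\mathcal{H}_{A_i})$ acting only on that register.

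The crucial observation is that each $\mathcal{H}_{A_i}$ is measured locally and is \emph{never} fed forward as an input to any other node: only the non-auxiliary output wires participate in the communication structure. Since every dilation $\Sigma^{(i)}$ is a bona fide quantum process (an isometric channel or a pure super-channel) and such objects compose, the network obtained by contracting all communication wires of $\{\Sigma^{(i)}\}_i$ is again a single input-independent quantum process. As a correlation has no loose quantum wires, the only wires left open after this contraction are the auxiliary registers $\mathcal{H}_{A_1},\dots,\mathcal{H}_{A_\mathsf{k}}$, so the contracted network prepares one fixed multipartite state $\rho_\Psi$ on $\mathcal{H}_{A_1}\otimes\cdots\otimes\mathcal{H}_{A_\mathsf{k}}$, exactly as in the sequential, DAG, ICO, and loop examples above. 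Having pushed every measurement past the input-independent dilations, the correlation factorizes as
\begin{equation}
    p(\vec{a}\,|\,\vec{x}) = \Tr\!\left[\Big(\bigotimes_{i=1}^{\mathsf{k}} M^{(i)}_{a_i|x_i}\Big)\,\rho_\Psi\right],
\end{equation}
which is precisely a $\mathsf{k}$-partite Bell quantum model: a shared state measured by $\mathsf{k}$ spatially separated parties, each applying a local POVM depending only on its own input. For the purely sequential backbone this can be organized as an induction on $\mathsf{k}$, the base case being the single decomposition theorem and the recursive step amounting to absorbing one already-dilated node into the shared source, as in the tripartite diagram above; the simultaneous argument covers the remaining composition types at once.

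The main obstacle is \emph{uniformity}: establishing that the ``delay the measurement, keep the dilation input-independent'' move genuinely commutes through every admissible composition, including the higher-order ones. For sequential and DAG compositions this follows immediately from composition of channels, but for indefinite causal order and loops one is composing along process matrices and super-instruments, where one must verify (i) that contracting input-independent pure dilations against a process matrix $W$ again yields a valid, input-independent state preparation on the auxiliary registers, and (ii) that normalization is preserved — which holds here because the composed correlations are non-signalling, so $\rho_\Psi$ is a genuine normalized state. Casting all composition types in a single link-product/Choi framework, and checking that the auxiliary registers remain spatially local throughout, is where the real work lies.
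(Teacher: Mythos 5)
Your proposal is correct and follows essentially the same route as the paper: apply the simultaneous-purification decomposition to every non-signalling assemblage, delay all measurements onto the auxiliary registers, and absorb the contracted input-independent dilations into one big source preparation measured by local POVMs. The ``uniformity'' work you flag as remaining is precisely what the paper supplies in Appendix~\ref{app:SGHJW-proofs}, where Theorem~\ref{thm:simultaneous-purification-quantum-object} proves the decomposition once and for all for arbitrary quantum objects in the Choi picture (covering instruments, combs, process matrices, and multi-round process matrices), so that the same delayed-measurement argument applies verbatim to every composition type.
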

\noindent As the contraposition, it follows that any interactive quantum realisation of $\mathsf{k}$-partite correlations that are outside the Bell quantum set must involve signalling communication.
This
holds even if the no-signalling principle is satisfied at the level of correlations,
and can be seen as a generalisation of the well-established fact that signalling is necessary to realise PR box correlations in one-way communication scenarios~\cite{PR}.

An analogous classical result follows immediately by replacing the non-signalling decompositions with the unsteerable decompositions \cite{Piani_2015,Uola_2018} depicted in Fig.~\ref{fig:decompositions}~(iii),
given by
\begin{equation}\label{eq:uns_assemblage}
    O_{a|x} = \sum_{\lambda} p(\lambda) p(a|x,\lambda) \Omega_\lambda,
\end{equation}
where $\{O_{a|x}\}_{a,x}$ is an assemblage, $\{\Omega_\lambda\}_\lambda$ is a family of (normalized) quantum objects of the same type, and $\lambda$ describes the classical variable on the internal wire.
\begin{theorem}\label{thm:general-composition-classical}
    Let $\mathsf{k} \in \mathbb{N}$. The arbitrary composition of $\mathsf{k}$ unsteerable assemblages always produces correlations which admit a $\mathsf{k}$-partite Bell local realisation.
\end{theorem}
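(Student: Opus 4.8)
The plan is to mirror the proof of Theorem~\ref{thm:general-composition}, replacing each quantum purification by a classical shared variable and each delayed auxiliary measurement by a classical response function. First I would invoke the unsteerable decomposition~\eqref{eq:uns_assemblage} for each of the $\mathsf{k}$ assemblages $\{O^{(i)}_{a_i|x_i}\}$ appearing in the composition, writing $O^{(i)}_{a_i|x_i} = \sum_{\lambda_i} p_i(\lambda_i)\, p_i(a_i|x_i,\lambda_i)\, \Omega^{(i)}_{\lambda_i}$. The crucial structural feature, exactly analogous to decomposition~(ii), is that the quantum object $\Omega^{(i)}_{\lambda_i}$ is independent of the classical input $x_i$ and output $a_i$, while all dependence on $x_i,a_i$ is carried by the \emph{local} classical response function $p_i(a_i|x_i,\lambda_i)$, which in turn depends only on the local hidden variable $\lambda_i$.

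Next I would collect the classical variables into a single shared variable $\lambda=(\lambda_1,\dots,\lambda_\mathsf{k})$ distributed according to the product $p(\lambda)=\prod_i p_i(\lambda_i)$, and treat the family $\{\Omega^{(i)}_{\lambda_i}\}_i$, for each fixed $\lambda$, as a deterministic quantum network carrying the same wiring as the original communication scenario. Since the observed correlations are obtained by contracting this network along whatever topology the scenario prescribes—sequential chain, DAG, process matrix, or loop—substituting the decompositions and using linearity gives
\begin{equation}
    p(\vec a\,|\,\vec x) = \sum_\lambda p(\lambda)\, q(\lambda)\, \prod_{i} p_i(a_i|x_i,\lambda_i),
\end{equation}
where $q(\lambda)$ is the scalar produced by contracting the input- and output-independent network $\{\Omega^{(i)}_{\lambda_i}\}_i$. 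Because each $\Omega^{(i)}_{\lambda_i}$ is a normalised object of its type (a state, a channel, or a super-channel) and the contraction closes all quantum wires, $q(\lambda)$ is a non-negative number.

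The key step is then to verify that $\mu(\lambda):=p(\lambda)\,q(\lambda)$ is a genuine probability distribution, so that the displayed expression is a bona fide $\mathsf{k}$-partite Bell local model with shared randomness $\lambda$ and local response functions $p_i(a_i|x_i,\lambda_i)$. Non-negativity is immediate; for normalisation I would sum over $\lambda$ and use linearity to replace each $\sum_{\lambda_i} p_i(\lambda_i)\,\Omega^{(i)}_{\lambda_i}$ by the marginal object $O^{(i)}=\sum_{a_i}O^{(i)}_{a_i|x_i}$, which by definition of an assemblage is a deterministic (trace-preserving, respectively TP-preserving) object. The contraction of a closed network of such deterministic objects equals $1$—precisely the normalisation guaranteed by the validity conditions of the composing structure, e.g.\ the process-matrix normalisation $\Tr[W(\mathcal{A}\otimes\mathcal{B})]=1$ for CPTP inputs, or the TP-preservation of the super-channel in the loop case. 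Hence $\sum_\lambda\mu(\lambda)=1$.

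The main obstacle I anticipate is exactly this normalisation bookkeeping across the more exotic topologies: for indefinite causal order and for loops one must confirm that contracting the deterministic marginals against the process matrix or the super-channel still yields $1$, and that the intermediate scalars $q(\lambda)$ stay non-negative, rather than simply transcribing the sequential-chain argument. Once the composability and normalisation of the marginal objects established earlier in the paper are invoked, the response functions factorise per party and the model is manifestly local, delivering the claimed $\mathsf{k}$-partite Bell local realisation.
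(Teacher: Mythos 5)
Your proposal is correct and follows essentially the same route as the paper, which obtains this theorem by substituting the unsteerable decomposition (iii) of Fig.~\ref{fig:decompositions} for the non-signalling decomposition (ii) in the compositional argument of Theorem~\ref{thm:general-composition}: decompose each assemblage, collect the local hidden variables into one shared variable, and let the response functions play the role of the delayed measurements. The only thing you overcomplicate is the normalisation bookkeeping: since each $\Omega^{(i)}_{\lambda_i}$ is a normalised (deterministic) object of its type, the closed contraction along any valid topology (chain, DAG, process matrix, loop) gives $q(\lambda)=1$ pointwise, so $\mu(\lambda)=p(\lambda)$ is automatically a product distribution and no averaging argument is needed.
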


\vspace{5pt}
\noindent\textbf{Related Works and Discussion.}
Given the versatility of the S-G-HJW theorem, similar questions concerning its composition have been formulated in multiple areas of quantum information.
In this section, we outline some of these contexts, and explain how our result relates to them.

\textit{Channel steering and post-quantum steering.}
The definition of non-signalling assemblage of instruments is equivalent to the non-signalling channels introduced by Piani in \cite{Piani_2015}, and reconsidered in \cite{Uola_2018}, where they also report the decomposition theorem in its operator-algebraic form.
However, they do not consider the composition of multiple instruments, nor the sequential applications of the theorem.
Furthermore, our work has evident connections to post-quantum steering \cite{Sainz_2015}, for which the topic of interest are non-signalling bipartite assemblages of states:
  $  \sum_a \sigma_{ab|xy} = \sigma_{b|y},
    \sum_b \sigma_{ab|xy} = \sigma_{a|x}.$

Ref.~\cite{Sainz_2015} ask a question very related to ours: can every non-signalling bipartite state assemblage  be prepared by a tripartite spatial Bell quantum model, \emph{i.e.}, $\sigma_{ab|xy}=\Tr_{AB}(\rho_{ABC} \; A_{a|x}\otimes B_{b|y}\otimes \mathds{1} )$, where $\rho_{ABC}$ is a quantum state and $\{A_{a|x}\}_{ax},\{B_{b|y}\}_{by}$ are POVMs?
The answer is no. Maybe this is less surprising after our results, that characterises the necessary and sufficient conditions to connect sequential and non-local scenarios. Nevertheless our theorem cannot be directly applied in this setting, because in these bipartite assemblages we cannot separate Alice's and Bob's labels.

\textit{Contextuality and GPTs.}
This work can be seen as the first step for a natural extension of the map presented in \cite{Wright_2023} to multipartite scenarios.
What we call non-signalling assemblages of states and instruments, in the generalised contextuality framework \cite{Spekkens_2005} are called preparation and transformation equivalences.
We focused on the quantum case, but by considering the decomposition of Fig~\ref{fig:decompositions} as the definition of non-signalling assemblage (ii) and unsteerable assemblage (iii), these theorems and their compositions extend to generalised probabilistic theories (GPTs)~\cite{barrett2005GPT,plavala2023GPT} and operational probabilistic theories (OPT) with purification~\cite{chiribella2010opt}. For instance, using the definitions of state, channel, instruments, and measurements from \cite{plavala2023GPT} or \cite{chiribella2010opt}, it follows  that composition of non-signalling assemblages respecting the decomposition\footnote{From a quantum perspective, our results ensure the equivalence between two definitions of non-signalling assemblages, an algebraic one, given by Eq.~\eqref{eq:ns_assemblage}, and a more operational one, given by the decomposition (ii) of Fig.~\ref{fig:decompositions}. In arbitrary GPTs and OPTs, the equivalence by these two definitions may not hold, hence, in order to make sure Thm.~\ref{thm:general-composition} generalises to arbitrary GPTs and OPTs, we may simply set the decomposition (ii) of Fig.~\ref{fig:decompositions} as the definition of a non-signalling assemblage. Analogously, we may define unsteerable assemblage as the decomposition (iii) of Fig.~\ref{fig:decompositions} instead of the algebraic definition of Eq.~\eqref{eq:uns_assemblage}, to ensure that Thm.~\ref{thm:general-composition-classical} generalises to GPTs and OPTs.} (ii) illustrated in Fig.~\ref{fig:decompositions} will necessarily lead to spatial correlations. And, conversely, all spatial correlations can be realised in the corresponding communication scenario.

\textit{Infinite dimensions and Radon-Nikodym theorems.}
In this letter, Hilbert spaces were assumed to be finite-dimensional, a common assumption in quantum information. 
In full generality, one should consider the operator-algebraic framework, that allows to neatly treat infinite dimensions and sub-algebras of $\mathsf{B}(\mathcal{H})$.
The S-G-HJW theorem has a natural algebraic equivalent, the celebrated Radon-Nikodym theorem for positive linear functionals. Interestingly, before this work, the higher-order generalisations to instruments \cite{Raginsky} and super-instruments \cite{Chiribella_2013-RN-supermaps} were mainly formulated in this language, and perhaps because of this less known in the quantum physics community.
However, the composition results do not translate directly.
A chain rule for the Radon-Nikodym theorem was recently proven in the context of compiled nonlocal games \cite{baroni2025boundingasymptoticquantumvalue}. This corresponds in this work to the case of $\mathsf{k}$ sequential players. We leave as an open question whether an analogous statement holds for other communication scenarios in infinite dimensions.

\vspace{5pt}
\textbf{Acknowledgments.}
MB and DL would like to thank Rob Spekkens for discussions about the naming of the no-signalling property for assemblages.
MB acknowledges funding from QuantEdu France, a State aid managed by the French National Research Agency for France 2030 with the reference ANR-22-CMAS-0001.
DL acknowledges support from the Quantum Advantage Pathfinder (QAP) research program within the UK’s National Quantum Computing Center (NQCC).
MTQ is supported by the French Agence Nationale de la Recherche (ANR) under grant JCJC HOQO-KS.
MB and DM acknowledge funding from the PEPR integrated project EPIQ ANR-22-PETQ-0007, part of Plan France 2030.

\nocite{apsrev42Control} 
\bibliographystyle{apsrev4-2}
\bibliography{biblio}

\clearpage
\appendix \setcounter{secnumdepth}{3}
\renewcommand{\thesection}{\Alph{section}}
\renewcommand{\thesubsection}{\thesection\arabic{subsection}}
\setcounter{equation}{0}
\renewcommand{\theequation}{\thesection\arabic{equation}}

\section{Simultaneous purification of non-signalling assemblages of arbitrary quantum objects} \label{app:SGHJW-proofs}

In this section, we state the most general version of the simultaneous purification results presented in this work, and provide its proof.
We show that the non-signalling property of assemblages is a necessary and sufficient condition for decomposition~(ii) from Figure~\ref{fig:decompositions} not only for assemblages of states, instruments, or super-instruments, but also for arbitrary quantum objects.
To the best of our knowledge, this fact has not been formally recorded previously.

We start by stating again the well-known simultaneous purification theorem for states, for which many constructive proofs can be found in the literature~\cite{Schrödinger_1936,gisin1989stochastic,hughston1993complete,kirkpatrick2005schrodingerhjwtheorem}.
In the following, we will reduce the general case of quantum objects to this theorem.
The advantage of this reduction is that it provides a concrete and constructive way of finding such a simultaneous purification, also in the general case, whenever it exists.
\begin{theorem}\label{thm:simultaneous-purification-states}
    Let $\{\rho_{a|x}\}_{a,x} \subseteq \mathsf{B}(\mathcal{H})$ be an assemblage of quantum states on a finite-dimensional Hilbert space $\mathcal{H}$. Then, the following two conditions are equivalent:
    \begin{enumerate}
        \item The assemblage $\{\rho_{a|x}\}_{a,x}$ is no-signalling, \emph{i.e.}, there exists a state $\rho \in \mathsf{B}(\mathcal{H})$, such that
            \begin{equation}
                \sum_a \rho_{a|x} = \rho, \quad \text{for all } x.
            \end{equation}
        \item There exists a (finite-dimensional) auxiliary Hilbert space $\mathcal{H}_A$, a pure quantum state $\ket{\psi} \in \mathcal{H}_A \otimes \mathcal{H}$, and a family of measurements $\{M_{a|x}\}_{a,x} \subseteq \mathsf{B}(\mathcal{H}_A)$, such that
            \begin{equation}
                \rho_{a|x} = \Tr_A \left[ \left(M_{a|x} \otimes \mathds{1} \right) \ketbra{\psi}{\psi} \right].
            \end{equation}
    \end{enumerate}
\end{theorem}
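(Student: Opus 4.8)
The plan is to prove the two implications separately, reducing the whole statement to the standard Hughston--Jozsa--Wootters (HJW) measurement construction, which is the computational heart of the theorem.

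\emph{The easy direction ($2 \Rightarrow 1$)} I would dispatch first. Assuming the decomposition exists, I sum over $a$ and use linearity of the partial trace together with the POVM normalisation $\sum_a M_{a|x} = \mathds{1}$, valid for every $x$:
\[
    \sum_a \rho_{a|x} = \Tr_A\!\left[\left(\mathds{1}\otimes\mathds{1}\right)\ketbra{\psi}{\psi}\right] = \Tr_A\ketbra{\psi}{\psi}.
\]
The right-hand side is manifestly independent of $x$, so setting $\rho := \Tr_A\ketbra{\psi}{\psi}$ yields no-signalling.

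\emph{The hard direction ($1 \Rightarrow 2$)} is where the construction lives. Fix a spectral decomposition $\rho = \sum_i \lambda_i \ketbra{i}{i}$ with $\lambda_i > 0$ on the support of $\rho$ (of rank $r$), take $\mathcal{H}_A$ of dimension $r$, and form the canonical purification $\ket{\psi} = \sum_i \sqrt{\lambda_i}\,\ket{i}_A\ket{i}$. The conceptual crux of the whole argument is that this \emph{single} $\ket{\psi}$ is available for all $x$ simultaneously, which is possible precisely because no-signalling forces $\sum_a \rho_{a|x} = \rho$ to be the \emph{same} operator for every $x$; this is the only place the hypothesis is used, and it is exactly what turns a family of separate ensemble decompositions into one simultaneous purification. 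Next, since $\rho - \rho_{a|x} = \sum_{a'\neq a}\rho_{a'|x} \geq 0$, each $\rho_{a|x}$ is supported inside $\mathrm{supp}(\rho)$, so I may define the candidate operators $M_{a|x} := \rho^{-1/2}\,\rho_{a|x}^{\mathsf T}\,\rho^{-1/2}$, where inverse and transpose are taken in the eigenbasis of $\rho$ on its support.

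It then remains to verify three routine points. A direct index computation shows
\[
    \Tr_A\!\left[\left(M_{a|x}\otimes\mathds{1}\right)\ketbra{\psi}{\psi}\right] = \sum_{i,j}\sqrt{\lambda_i\lambda_j}\,(M_{a|x})_{ji}\,\ketbra{i}{j},
\]
and the definition of $M_{a|x}$ is chosen exactly so that this reproduces $\rho_{a|x}$. Positivity $M_{a|x}\geq 0$ follows because the transpose of a positive semidefinite matrix is positive semidefinite and conjugation by $\rho^{-1/2}$ preserves this. Normalisation follows from $\sum_a \rho_{a|x}^{\mathsf T} = \rho^{\mathsf T} = \rho$ (diagonal, hence symmetric, in its eigenbasis), giving $\sum_a M_{a|x} = \Pi$, the projector onto $\mathrm{supp}(\rho)$; since $\dim\mathcal{H}_A = r$ this equals $\mathds{1}_{\mathcal{H}_A}$, and if one prefers a larger $\mathcal{H}_A$ the leftover $\mathds{1}-\Pi$ can be distributed freely among the outcomes without affecting any $\rho_{a|x}$, as $\ket{\psi}$ has no support there. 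The main obstacle is therefore not the algebra, which is elementary, but recognising that the no-signalling condition is precisely the statement needed to share one purification across all settings $x$; once that is isolated, the remainder is the classical HJW bookkeeping.
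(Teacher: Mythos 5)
Your proof is correct and follows essentially the same route as the paper's: the canonical purification $\ket{\psi}=\sum_i \ket{i}\otimes\rho^{1/2}\ket{i}$ together with the conjugated operators $M_{a|x}=\left[\rho^{-1/2}\rho_{a|x}\rho^{-1/2}\right]^{T}$ (your ordering of transpose and conjugation agrees with the paper's since you transpose in the eigenbasis of $\rho$, where $\rho^{-1/2}$ is diagonal). The only difference is cosmetic: you handle the support/rank issue explicitly (showing $\operatorname{supp}(\rho_{a|x})\subseteq\operatorname{supp}(\rho)$ and completing $\sum_a M_{a|x}=\Pi$ to the identity), a point the paper's proof glosses over by invoking the Moore--Penrose pseudo-inverse and asserting $\sum_a M_{a|x}=\mathds{1}$.
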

Note that Theorem~\ref{thm:simultaneous-purification-states} remains true even if the assemblage is not normalized, in which case also the states $\rho$ and $\ket{\psi}$ become non-normalized states.
\begin{proof}
Let us report a fairly standard construction, which can be found for example in~\cite{Navascu_s_2012}, that works for any separable Hilbert space $\mathcal{H}$.
    Let $\mathcal{H}_A$ be a copy of $\mathcal{H}$ and define the operators
    \begin{align}
         M_{a|x} := \left[ \rho^{-1/2} \rho_{a|x} \rho^{-1/2}\right]^T,
    \end{align}
    on $\mathcal{H}_A$, where the transpose is with respect to some fixed orthonormal basis $\ket{i}$ of $\mathcal{H}_A$, the positive square root is always well-defined for positive operators, and the inverse is the Moore-Penrose pseudo-inverse.
    Clearly all operators $M_{a|x}$ are positive and it holds that
    \begin{equation}
        \sum_a M_{a|x} = \left[ \rho^{-1/2} \left( \sum_a \rho_{a|x} \right)  \rho^{-1/2}\right]^T = \mathds{1}.
    \end{equation}
    Hence, the operators $\{M_{a|x}\}_{a,x}$ form a family of POVMs.
    Now, define
    \begin{align}
         \ket{\psi} := \sum_{i} \ket{i} \otimes \rho^{1/2} \ket{i} \; \in \mathcal{H}_A \otimes \mathcal{H}.
    \end{align}
    Then, $\braket{\psi|\psi} = \Tr [ \rho ] = 1$, and therefore $\ket{\psi}$ describes a normalized, pure quantum state. A direct calculation verifies that
    \begin{align}
        \Tr_A \left[ \left(M_{a|x} \otimes \mathds{1} \right) \ketbra{\psi}{\psi} \right] = \rho_{a|x},
    \end{align}
    which concludes the proof.
\end{proof}

This theorem can be generalised from assemblages of states to assemblages of quantum instruments \cite{Raginsky, Uola_2018}.
This can be seen as a direct consequence of a famous theorem in C$^*$-algebras: the Radon-Nikodym theorem for CP maps \cite{BS86radon}.
However, if we restrict the dimension of the Hilbert spaces to be finite, we can avoid the C$^*$-algebraic proof, and retrieve the same result using Kraus or Choi operators.

\begin{theorem}\label{thm:simultaneous-purification-instruments}
    Let $\{\mathcal{I}_{a|x}\}_{a,x} \subseteq \mathsf{CP}(\mathcal{H}_I, \mathcal{H}_O)$ be an assemblage of instruments on finite-dimensional Hilbert spaces $\mathcal{H}_I$ and $\mathcal{H}_O$. Then, the following two conditions are equivalent:
    \begin{enumerate}
        \item The assemblage $\{\mathcal{I}_{a|x}\}_{a,x}$ is no-signalling, \emph{i.e.}, there exists a channel $\mathcal{I} \in \mathsf{CPTP}(\mathcal{H}_I, \mathcal{H}_O)$, such that
            \begin{equation}
                \sum_a \mathcal{I}_{a|x} = \mathcal{I}, \quad \text{for all } x.
            \end{equation}
        \item There exists a (finite-dimensional) auxiliary Hilbert space $\mathcal{H}_A$, an isometric channel $\Lambda(\cdot)= V (\cdot) V^\dagger \in \mathsf{CPTP}(\mathcal{H}_I, \mathcal{H}_A \otimes \mathcal{H}_O)$, with $V$ an isometry, and a family of measurements $\{M_{a|x}\}_{a,x} \subseteq \mathsf{B}(\mathcal{H}_A)$, such that
            \begin{equation}
                \mathcal{I}_{a|x}(\rho)= \Tr_A \left[ \left(M_{a|x} \otimes \mathds{1} \right) \Lambda(\rho) \right].
            \end{equation}
    \end{enumerate}
\end{theorem}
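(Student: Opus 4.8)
The plan is to reduce Theorem~\ref{thm:simultaneous-purification-instruments} to the already-proven state case (Theorem~\ref{thm:simultaneous-purification-states}) via the Choi--Jamio\l{}kowski isomorphism, exactly as the excerpt suggests. The implication $(2)\Rightarrow(1)$ is the easy direction and I would dispatch it first: if $\mathcal{I}_{a|x}(\rho)=\Tr_A[(M_{a|x}\otimes\mathds{1})\Lambda(\rho)]$, then summing over $a$ and using $\sum_a M_{a|x}=\mathds{1}$ gives $\sum_a \mathcal{I}_{a|x}(\rho)=\Tr_A[\Lambda(\rho)]$, which is manifestly independent of $x$ and trace-preserving because $\Lambda$ is an isometric channel; hence the assemblage is no-signalling with $\mathcal{I}=\Tr_A\circ\Lambda$.

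For the hard direction $(1)\Rightarrow(2)$, the idea is to pass to Choi operators. First I would define the Choi states $C_{a|x}:=(\mathcal{I}_{a|x}\otimes\mathds{1})(\ketbra{\Omega}{\Omega})$, where $\ket{\Omega}=\sum_i\ket{ii}$ is the unnormalized maximally entangled vector on $\mathcal{H}_I\otimes\mathcal{H}_I$, so that each $C_{a|x}\in\mathsf{B}(\mathcal{H}_O\otimes\mathcal{H}_I)$ is positive. The no-signalling hypothesis $\sum_a\mathcal{I}_{a|x}=\mathcal{I}$ translates into $\sum_a C_{a|x}=C$, the Choi operator of the fixed channel $\mathcal{I}$, which is independent of $x$. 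Thus $\{C_{a|x}\}_{a,x}$ is precisely a no-signalling \emph{state} assemblage on the Hilbert space $\mathcal{H}_O\otimes\mathcal{H}_I$, and Theorem~\ref{thm:simultaneous-purification-states} applies verbatim: there exist an auxiliary space $\mathcal{H}_A$, measurements $\{M_{a|x}\}_{a,x}\subseteq\mathsf{B}(\mathcal{H}_A)$, and a purification $\ket{\psi}\in\mathcal{H}_A\otimes(\mathcal{H}_O\otimes\mathcal{H}_I)$ with $C_{a|x}=\Tr_A[(M_{a|x}\otimes\mathds{1})\ketbra{\psi}{\psi}]$.

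The step I expect to be the main obstacle is translating the purification $\ket{\psi}$ of the Choi state back into an \emph{isometric channel} $\Lambda:\mathcal{H}_I\to\mathcal{H}_A\otimes\mathcal{H}_O$, i.e.\ verifying that the object produced by the state theorem is of the right operational type. The key observation is that $C=\sum_a C_{a|x}$ is the Choi operator of a genuine CPTP map, which forces the reduced state $\Tr_A[\ketbra{\psi}{\psi}]=C$ to satisfy the trace-preservation constraint $\Tr_{\mathcal{H}_O\otimes\mathcal{H}_A}[\,\cdot\,]=\mathds{1}_{\mathcal{H}_I}$ on its $\mathcal{H}_I$ marginal. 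I would then read $\ket{\psi}$, regarded as a vector in $(\mathcal{H}_A\otimes\mathcal{H}_O)\otimes\mathcal{H}_I$, as the Choi vector of a pure map from $\mathcal{H}_I$ to $\mathcal{H}_A\otimes\mathcal{H}_O$; the trace-preservation property is exactly the isometry condition $V^\dagger V=\mathds{1}_{\mathcal{H}_I}$ for the associated Kraus/isometry operator $V$ defined by $V\ket{i}=(\mathds{1}\otimes\bra{i})\ket{\psi}$. Setting $\Lambda(\cdot)=V(\cdot)V^\dagger$ and unwinding the Choi isomorphism on both sides of $C_{a|x}=\Tr_A[(M_{a|x}\otimes\mathds{1})\ketbra{\psi}{\psi}]$ then yields $\mathcal{I}_{a|x}(\rho)=\Tr_A[(M_{a|x}\otimes\mathds{1})\Lambda(\rho)]$, completing the proof. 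The bookkeeping of which tensor factor carries the reference system and the transposes inherited from the Choi convention is where care is required, but no genuinely new idea beyond the state case is needed.
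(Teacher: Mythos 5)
Your proposal is correct and takes essentially the same route as the paper's own ``Choi proof'': both reduce the instrument case to Theorem~\ref{thm:simultaneous-purification-states} via the Choi--Jamio\l{}kowski isomorphism, with the key step being exactly your observation that trace-preservation of the no-signalling channel $\mathcal{I}$ forces the purified Choi operator to be the Choi operator of an \emph{isometric} channel $\Lambda$. The paper additionally records a second, independent proof via minimal Kraus decompositions, but your argument needs no idea beyond the Choi reduction you describe.
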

\begin{proof}
We present two equivalent constructive proofs, using Kraus and Choi operators, respectively.

\textbf{Kraus proof.}
Fix a minimal Kraus decomposition for $\mathcal{I}$
\begin{equation}
    \mathcal{I}(\rho)  = \sum_{i=1}^r K_i \rho K_i^\dagger,
    \qquad \sum_i K_i^\dagger K_i = \mathds{1},
\end{equation}
and let $\mathcal{H}_A$ be an $r$-dimensional Hilbert space with basis $\{\ket{i}\}_{i=1,\dots,r}$.
The associated Stinespring isometry $V : \mathcal{H}_I \to \mathcal{H}_A \otimes \mathcal{H}_O$ is then given by
\begin{equation}
    V : \ket{v} \mapsto \sum_{i=1}^r \ket{i} \otimes K_i \ket{v},
\end{equation}
so that the isometric channel is 
\begin{equation}
    \Lambda(\rho)= V \rho V^\dagger = \sum_{i,j}\ketbra{i}{j} \otimes  K_i \rho K_j^\dagger.
\end{equation}
We can similarly define a not-necessarily minimal Kraus decomposition for all CP maps $\mathcal{I}_{a|x}$ by
\begin{equation}
    \mathcal{I}_{a|x}(\rho)  = \sum_{\mu} L_\mu^{(a,x)} \rho \left( L_\mu^{(a,x)} \right)^\dagger,
\end{equation}
such that
\begin{equation}
    \sum_\mu \left( L_\mu^{(a,x)} \right)^\dagger \! L_\mu^{(a,x)}\leq \mathds{1}, \;\; \sum_a \sum_\mu \left( L_\mu^{(a,x)} \right)^\dagger \! L_\mu^{(a,x)} = \mathds{1}.
\end{equation}
By assumption we know that $\sum_a \mathcal{I}_{a|x} = \mathcal{I}$, hence for any fixed $x$, the Kraus families $\{L_\mu^{(a,x)}\}_{\mu,a}$ must generate the same map as $\{K_i\}_i$.
Consequently, by the minimality of the Kraus decomposition $\{K_i\}_i$, all operators $L_\mu^{(a,x)}$ must be expressible as a linear combination of the $K_i$'s in the following way
\begin{equation}
    L_\mu^{(a,x)} = \sum_{i=1}^r u_{\mu,i}^{(a,x)} K_i, \qquad u_{\mu,i}^{(a,x)} \in \mathbb{C},
\end{equation}
where the coefficients $u_{\mu,i}^{(a,x)}$ form an isometry $U^{(x)}$ in the sense that they satisfy
\begin{equation}\label{eqn:kraus-isometry}
    \sum_{a,\mu} \left( u_{\mu,i}^{(a,x)} \right)^* u_{\mu,j}^{(a,x)} = \left\{ \begin{matrix}
        1, & \text{ if } i=j, \\
        0, & \text{ otherwise.}
    \end{matrix}
    \right.
\end{equation}
Now, the vectors
\begin{equation}
    \ket{u_{\mu}^{(a,x)}}= \sum_{i=1}^r u_{\mu,i}^{(a,x)} \ket{i} \; \in \mathcal{H}_A
 \end{equation}
yield by \eqref{eqn:kraus-isometry} a family of POVMs $\{M_{a|x}\}_{a,x}$ on the ancilla space by letting
\begin{equation}
    M_{a|x} = \sum_\mu \ketbra{u_{\mu}^{(a,x)}}{u_{\mu}^{(a,x)}}.
\end{equation}
Finally, it is easily checked that
\begin{align}
     &\Tr_A \left[ (M_{a|x} \otimes \mathds{1}) V \rho V^\dagger \right] \nonumber \\
     &=\Tr_A \left[ \sum_{i,j,\mu} \ketbra{u_{\mu}^{(a,x)}}{u_{\mu}^{(a,x)}} \ketbra{i}{j} \otimes K_i \rho K_j^\dagger
     \right] \\
     &= \sum_{i,j,\mu} u_{\mu,j}^{(a,x)} \bar{u}_{\mu,i}^{(a,x)} K_i \rho K_j^\dagger
     = \sum_{\mu} L_\mu^{(a,x)} \rho \left( L_\mu^{(a,x)} \right)^\dagger
     =\mathcal{I}_{a|x}(\rho). \nonumber
\end{align}

\textbf{Choi proof.}
As the Choi–Jamiołkowski isomorphism establishes a one-to-one correspondence between quantum channels and quantum states, one can apply Theorem~\ref{thm:simultaneous-purification-states} to the Choi operators of a non-signalling instrument assemblage to obtain a simultaneous purification directly in the Choi picture.
More explicitly, 
consider the Choi operators of the instruments
\begin{equation}
    \Tilde{\mathcal{I}}_{a|x} 
= \sum_{i,j} \ketbra{i}{j} \otimes \mathcal{I}_{a|x} \left( \ketbra{i}{j} \right)
    \; \in \mathsf{B}(\mathcal{H}_{I}\otimes \mathcal{H}_{O})
\end{equation}
and consequently
\begin{equation}
    \Tilde{\mathcal{I}} = \sum_a \Tilde{\mathcal{I}}_{a|x}
= \sum_{i,j} \ketbra{i}{j} \otimes \mathcal{I}(\ketbra{i}{j})
    \; \in \mathsf{B}(\mathcal{H}_{I}\otimes \mathcal{H}_{O}).
\end{equation}
Then, inspired by the proof of Theorem~\ref{thm:simultaneous-purification-states}, we define the purification of $\Tilde{\mathcal{I}} \in \mathsf{B}(\mathcal{H}_{I}\otimes \mathcal{H}_{O})$ by
\begin{align}
    \ket{\psi}:=\sum_{ij} \ket{ij} \otimes \Tilde{\mathcal{I}}^{1/2} \ket{ij} \; \in \mathcal{H}_A \otimes \mathcal{H}_I \otimes \mathcal{H}_O,
\end{align}
with the auxiliary Hilbert space $\mathcal{H}_A \simeq \mathcal{H}_I \otimes \mathcal{H}_O$.
Note that since $\Tr_{AO} \ketbra{\psi}{\psi}=\mathds{1}_I$, $\ketbra{\psi}{\psi}$ is the Choi operator of an isometric quantum channel $\Lambda(\cdot)= V (\cdot) V^\dagger$ from $\mathsf{B}(\mathcal{H}_{I} )$ to $\mathsf{B}(\mathcal{H}_{A} \otimes \mathcal{H}_{O})$.
Now, we obtain the measurements defined on the auxiliary space $\mathcal{H}_{A}$ as
\begin{align}
M_{a|x} &\!=\! \left[ \Tilde{\mathcal{I}}^{-1/2} \Tilde{\mathcal{I}}_{a|x} \Tilde{\mathcal{I}}^{-1/2}\right]^T.
\end{align}
As before, a direct calculation shows that
\begin{equation}
    \Tilde{\mathcal{I}}_{a|x} = \Tr_{A} \left[ \left( M_{a|x} \otimes \mathds{1} \right) \ketbra{\psi}{\psi} \right],
\end{equation}
and hence for all $\rho \in \mathsf{B}(\mathcal{H}_I)$ also that
\begin{equation}
    \mathcal{I}_{a|x}(\rho)= \Tr_A \left[ \left(M_{a|x} \otimes \mathds{1} \right) \Lambda(\rho) \right],
\end{equation}
which concludes the proof.
\end{proof}

The Kraus decomposition proof is very explicit, however the Choi approach is best-suited to consider higher-order generalisations.
In the following, we prove that the Choi approach allows us to obtain simultaneous purification theorems not only for instruments, but also for super-instruments and arbitrary other quantum objects.

To formally capture the notion of general quantum objects, we adopt the definition proposed in~\cite{Milz_2024} which is formulated directly in the Choi picture.
Note that we modified the original definition to allow for sub-normalization instead of strict tracial normalization of quantum objects.
This allows us to extend the original definition to the sub-normalized elements of assemblages.
\begin{definition}\label{def:quantum-object}
    A set of linear operators $\mathcal{S}\subseteq \mathsf{B}(\mathcal{H})$ is called a quantum object set, if there exist a linear projective map $\mathcal{P}$ and a number $\gamma \in \mathbb{R}$ such that for every element $\Tilde{O} \in \mathsf{B}(\mathcal{H}$, it holds that $\Tilde{O} \in \mathcal{S}$ if and only if:
    \begin{enumerate}
        \item Positive semidefiniteness: $\Tilde{O} \geq 0$;
        \item Structural consistency: $\mathcal{P}(\Tilde{O})= \Tilde{O}$;
        \item Sub-normalization: $\Tr[\Tilde{O}] \leq \gamma$.
    \end{enumerate}
    Furthermore, a family $\{\tilde O_{a|x}\}_{a,x} \subseteq \mathcal{S}$ of quantum objects is called an assemblage if $\Tr[ \sum_a \tilde O_{a|x} ] = \gamma$ for all choices of $x$.
\end{definition}
Quantum states and their assemblages are the simplest example, and satisfy the definition for $\mathcal{P}$ being the identity map and $\gamma=1$.
Choi operators of quantum channels with input space $\mathcal{H}_I$ and output space $\mathcal{H}_O$ are captured by letting $\mathcal{H} = \mathcal{H}_I \otimes \mathcal{H}_O$, $\gamma = d_I$, and
\begin{equation}
    \mathcal{P}[\tilde O] := \tilde O - \Tr_O [\tilde O] \otimes \frac{\mathds{1}_O}{d_O} + \Tr_{IO} [\tilde O] \frac{\mathds{1}_{IO}}{d_I d_O},
\end{equation}
where $d_I = \dim (\mathcal{H}_I)$ and $d_O = \dim (\mathcal{H}_O)$.
It is not complicated to show that the set of Choi matrices of instruments and super-instruments also satisfy this condition; however this definition also captures the more general objects of Choi matrices of quantum combs, process matrices, multi-round process matrices, and arbitrary higher-order quantum maps.
The linear projection $\mathcal{P}$ enforces the specific causal or signalling constraints of the objects; for more details and concrete examples on how to choose $\mathcal{P}$ and $\gamma$ to represent specific higher-order structures, we refer to \cite{Milz_2024, simmons2022higherordercausaltheoriesmodels, hoffreumon2024projectivecharacterizationhigherorderquantum}.

We now have the tools to formulate the general simultaneous purification theorem for quantum objects.

\begin{figure*}[t]
    \begin{tikzpicture}[scale=0.3, every node/.append style={scale=0.7}]
\begin{scope}[shift={(0,0)}]
    
\fill[gray!30] (0,-4) -- (2,-4) -- (2,-1) -- (6,-1) -- (6,-4) -- (8,-4) -- (8,-1) -- (12,-1) -- (12,-4) -- (14,-4) -- (14,-1) -- (18,-1) -- (18,-4) -- (20,-4) -- (20,4) -- (15,4) -- (15,1) -- (11,1) -- (11,4) -- (9,4) -- (9,1) -- (5,1) -- (5,4) -- (0,4) -- cycle;
    
\draw[line width=0.7pt] (6,2) -- (8,2) -- (8,5) -- (12,5) -- (12,2) -- (14,2) -- (14,7) -- (6,7) -- cycle;
\draw[double, line width=0.7pt] (6,6) -- (5,6) node[left] {$x$};
    \draw[double, line width=0.7pt] (14,6) -- (15,6) node[right] {$a$};
\draw[line width=0.7pt] (5,3) -- (6,3);
    \draw[line width=0.7pt] (8,3) -- (9,3);
    \draw[line width=0.7pt] (11,3) -- (12,3);
    \draw[line width=0.7pt] (14,3) -- (15,3);
\node at (10,6) {\large$\mathcal{S}_{a|x}$};
    
\draw[line width=0.7pt] (3,-7) -- (17,-7) -- (17,-2) -- (15,-2) -- (15,-5) -- (11,-5) -- (11,-2) -- (9,-2) -- (9,-5) -- (5,-5) -- (5,-2) -- (3,-2) -- cycle;
\draw[double, line width=0.7pt] (3,-6) -- (2,-6) node[left] {$y$};
    \draw[double, line width=0.7pt] (17,-6) -- (18,-6) node[right] {$b$};
\draw[line width=0.7pt] (2,-3) -- (3,-3);
    \draw[line width=0.7pt] (5,-3) -- (6,-3);
    \draw[line width=0.7pt] (8,-3) -- (9,-3);
    \draw[line width=0.7pt] (11,-3) -- (12,-3);
    \draw[line width=0.7pt] (14,-3) -- (15,-3);
    \draw[line width=0.7pt] (17,-3) -- (18,-3);
\node at (10,-6) {\large$\mathcal{T}_{b|y}$};

    \end{scope} 

\node at (22,0) {\large$=$};

\begin{scope}[shift={(24,0)}]

\fill[col5!10, rounded corners] (-0.5,-7.5) -- (20.5,-7.5) -- (20.5,7.5) -- (-0.5,7.5) -- cycle;
    
\fill[gray!30] (0,-4) -- (2,-4) -- (2,-1) -- (6,-1) -- (6,-4) -- (8,-4) -- (8,-1) -- (12,-1) -- (12,-4) -- (14,-4) -- (14,-1) -- (18,-1) -- (18,-4) -- (20,-4) -- (20,4) -- (15,4) -- (15,1) -- (11,1) -- (11,4) -- (9,4) -- (9,1) -- (5,1) -- (5,4) -- (0,4) -- cycle;
    
\draw[line width=0.7pt] (6,2) -- (8,2) -- (8,5) -- (12,5) -- (12,2) -- (14,2) -- (14,7) -- (6,7) -- cycle;
\draw[line width=0.7pt] (14,6) -- (23,6);
\draw[line width=0.7pt] (5,3) -- (6,3);
    \draw[line width=0.7pt] (8,3) -- (9,3);
    \draw[line width=0.7pt] (11,3) -- (12,3);
    \draw[line width=0.7pt] (14,3) -- (15,3);
\node at (10,6) {\large$\Sigma$};
\draw[line width=0.7pt] (23,5) -- (26,5) -- (26,8) -- (23,8) -- cycle;
\draw[double, line width=0.7pt] (23,7) -- (22,7) node[left] {$x$};
    \draw[double, line width=0.7pt] (26,7) -- (27,7) node[right] {$a$};
\node at (24.5,6.5) {\large $M^A_{a|x}$};
    
\draw[line width=0.7pt] (3,-7) -- (17,-7) -- (17,-2) -- (15,-2) -- (15,-5) -- (11,-5) -- (11,-2) -- (9,-2) -- (9,-5) -- (5,-5) -- (5,-2) -- (3,-2) -- cycle;
\draw[line width=0.7pt] (17,-6) -- (23,-6);
\draw[line width=0.7pt] (2,-3) -- (3,-3);
    \draw[line width=0.7pt] (5,-3) -- (6,-3);
    \draw[line width=0.7pt] (8,-3) -- (9,-3);
    \draw[line width=0.7pt] (11,-3) -- (12,-3);
    \draw[line width=0.7pt] (14,-3) -- (15,-3);
    \draw[line width=0.7pt] (17,-3) -- (18,-3);
\node at (10,-6) {\large$T$};
\draw[line width=0.7pt] (23,-7) -- (26,-7) -- (26,-4) -- (23,-4) -- cycle;
\draw[double, line width=0.7pt] (23,-5) -- (22,-5) node[left] {$y$};
    \draw[double, line width=0.7pt] (26,-5) -- (27,-5) node[right] {$b$};
\node at (24.5,-5.5) {\large $M^B_{b|y}$};

    \end{scope}

\end{tikzpicture}     \caption{\justifying A bipartite scenario in which the network is modelled by a multi-round process matrix, and the players are described by assemblages of quantum combs. Time is flowing from left to right.}
    \label{fig:multi-round-process-matrix}
\end{figure*}
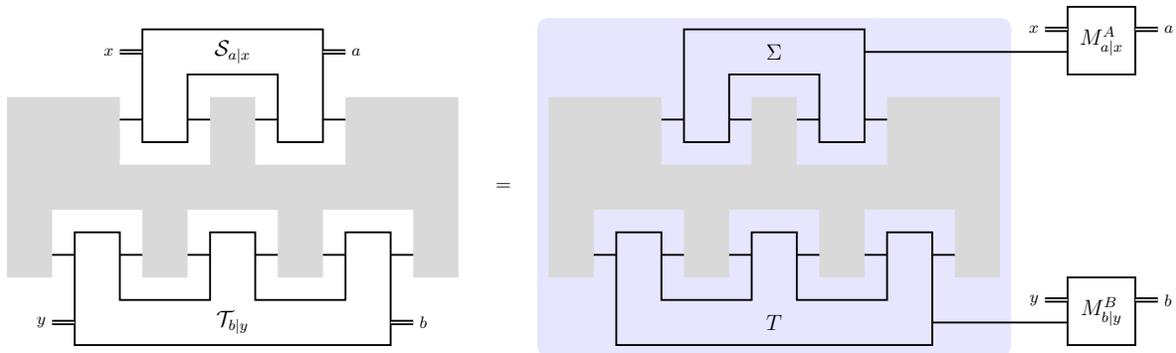

\begin{theorem}\label{thm:simultaneous-purification-quantum-object}
    Let $\mathcal{S} \subseteq \mathsf{B}(\mathcal{H})$ be a quantum object set characterized by $\mathcal{P}$ and $\gamma$, and $\{\tilde O_{a|x}\}_{a,x} \subseteq \mathcal{S}$ be an assemblage of quantum objects. Then, the following two conditions are equivalent:
    \begin{enumerate}
        \item The assemblage $\{\tilde O_{a|x}\}_{a,x}$ is no-signalling, \emph{i.e.}, there exists a quantum object $\tilde O \in \mathcal{S}$, such that
            \begin{equation}
                \sum_a \tilde O_{a|x} = \tilde O, \quad \text{for all } x. \label{eqn:non-signalling-assemblage-object}
            \end{equation}
        \item There exists a (finite-dimensional) auxiliary Hilbert space $\mathcal{H}_A$, a quantum object set $\mathcal{S}' \subseteq \mathsf{B}(\mathcal{H}_A \otimes \mathcal{H})$ characterized by $\mathcal{P}'$ and $\gamma$, where
        \begin{align}
            &\mathcal{P}'[\tilde W] := \tilde W + \mathds{1}_A \otimes \mathcal{P} ( \Tr_A [ \tilde W ] ) - \mathds{1}_A \otimes \Tr_A [ \tilde W ], \nonumber \\
            &\qquad \text{for } \tilde W \in \mathsf{B}(\mathcal{H}_A \otimes \mathcal{H}), \label{eqn:extended-projective-map}
        \end{align}
        a quantum object $\tilde \Omega \in \mathcal{S}'$, and a family of measurements $\{M_{a|x}\}_{a,x} \subseteq \mathsf{B}(\mathcal{H}_A)$, such that
            \begin{equation}
                \tilde O_{a|x} = \Tr_A \left[ \left( M_{a|x} \otimes \mathds{1} \right) \tilde \Omega \right]. \label{eqn:decomposition-assemblage-object}
            \end{equation}
        Furthermore, $\tilde \Omega$ can be chosen to be pure, \emph{i.e.}, of the form $\tilde \Omega = \ketbra{\omega}{\omega}$ for some $\ket{\omega} \in \mathcal{H}_A \otimes \mathcal{H}$.
    \end{enumerate}
\end{theorem}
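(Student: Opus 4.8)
The plan is to reduce the general statement to the already-established state version, Theorem~\ref{thm:simultaneous-purification-states}, by regarding the assemblage $\{\tilde O_{a|x}\}_{a,x}$ merely as a family of sub-normalised positive operators on $\mathcal{H}$, and then to check that the resulting purification automatically respects the structural constraint encoded by $\mathcal{P}$. The only genuinely new content beyond the state case is the bookkeeping of this constraint, which is captured by the extended projector $\mathcal{P}'$ of Eq.~\eqref{eqn:extended-projective-map}.

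For the direction $(2) \Rightarrow (1)$, I would assume the decomposition~\eqref{eqn:decomposition-assemblage-object} with $\{M_{a|x}\}_a$ a POVM for each $x$. Summing over $a$ and using $\sum_a M_{a|x} = \mathds{1}_A$ gives $\sum_a \tilde O_{a|x} = \Tr_A[\tilde\Omega] =: \tilde O$, manifestly independent of $x$, which is the no-signalling condition~\eqref{eqn:non-signalling-assemblage-object}. It then remains to confirm $\tilde O \in \mathcal{S}$: positivity follows since the partial trace preserves positivity, sub-normalisation from $\Tr[\tilde O] = \Tr[\tilde\Omega] \le \gamma$, and structural consistency $\mathcal{P}(\tilde O) = \tilde O$ by cancelling $\tilde\Omega$ in the identity $\mathcal{P}'(\tilde\Omega) = \tilde\Omega$ and stripping the $\mathds{1}_A$ factor in~\eqref{eqn:extended-projective-map}.

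For the converse $(1) \Rightarrow (2)$, which carries the main content, I would first invoke Theorem~\ref{thm:simultaneous-purification-states} applied to $\{\tilde O_{a|x}\}_{a,x}$ viewed as a (sub-normalised) state assemblage; this is legitimate precisely because the no-signalling hypothesis makes $\sum_a \tilde O_{a|x} = \tilde O$ independent of $x$. The theorem, together with the remark that it holds for sub-normalised assemblages, furnishes an ancilla $\mathcal{H}_A \simeq \mathcal{H}$, POVMs $M_{a|x} = [\tilde O^{-1/2}\tilde O_{a|x}\tilde O^{-1/2}]^T$, and the pure operator $\tilde\Omega = \ketbra{\omega}{\omega}$ with $\ket{\omega} = \sum_i \ket{i}\otimes \tilde O^{1/2}\ket{i}$, satisfying~\eqref{eqn:decomposition-assemblage-object}. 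Purity is thus automatic, and the key relation $\Tr_A[\tilde\Omega] = \tilde O$ holds by construction. The decisive step is to verify $\tilde\Omega \in \mathcal{S}'$: positivity and sub-normalisation are immediate, while the structural constraint follows by substituting $\Tr_A[\tilde\Omega] = \tilde O$ into~\eqref{eqn:extended-projective-map} and using $\mathcal{P}(\tilde O) = \tilde O$, so that the two correction terms cancel and $\mathcal{P}'(\tilde\Omega) = \tilde\Omega$.

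I expect the main obstacle to be conceptual rather than computational: one must argue that $\mathcal{P}'$ as defined in~\eqref{eqn:extended-projective-map} is a legitimate characterisation of a quantum object set, \emph{i.e.} that it is linear and idempotent on $\mathsf{B}(\mathcal{H}_A \otimes \mathcal{H})$, preserves Hermiticity and the positive cone appropriately, and enforces on a dilated object only the single constraint that its $A$-marginal $\tilde O$ lies in $\mathcal{S}$. This is exactly what makes the state-level purification — which \emph{a priori} ignores all higher-order structure — land inside the correct extended set $\mathcal{S}'$ and therefore qualify as a purification of the same type. Checking the projective property of $\mathcal{P}'$ and its compatibility with the defining data $(\mathcal{P},\gamma)$ is the delicate bookkeeping I would treat with most care.
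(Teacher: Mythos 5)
Your proposal is correct and follows essentially the same route as the paper's proof: both directions reduce the problem to Theorem~\ref{thm:simultaneous-purification-states} applied to the assemblage viewed as sub-normalised states, use the same explicit choices $\ket{\omega} = \sum_i \ket{i}\otimes \tilde O^{1/2}\ket{i}$ and $M_{a|x} = [\tilde O^{-1/2}\tilde O_{a|x}\tilde O^{-1/2}]^T$, and conclude with the same three-point verification (positivity, sub-normalisation, $\mathcal{P}'$-invariance via $\Tr_A[\tilde\Omega]=\tilde O\in\mathcal{S}$) that $\tilde\Omega\in\mathcal{S}'$. The only cosmetic difference is in $(2)\Rightarrow(1)$, where you extract $\mathcal{P}(\tilde O)=\tilde O$ by stripping $\mathds{1}_A$ from the identity $\mathcal{P}'(\tilde\Omega)=\tilde\Omega$, whereas the paper notes that $\tilde O=\sum_a \tilde O_{a|x}$ inherits membership in $\mathcal{S}$ from the assemblage elements; both arguments are sound, and your deferred check that $\mathcal{P}'$ is projective is likewise only asserted (as a ``direct calculation'') in the paper.
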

\begin{proof}
    First observe that $\mathcal{P}'$ is a projective map, which follows from a direct calculation, using that $\mathcal{P}$ is projective.
    
    The implication $(2) \Rightarrow (1)$ follows immediately from the linearity of the trace and the completeness of the measurement.
    Assume (2) holds. Summing \eqref{eqn:decomposition-assemblage-object} over $a$, and using the completeness relation $\sum_a M_{a|x} = \mathds{1}_A$, we obtain:
    \begin{equation}
        \sum_a \tilde O_{a|x} = \Tr_A \left[ \left( \sum_a M_{a|x} \otimes \mathds{1} \right) \tilde \Omega \right] = \Tr_A \left[ \tilde \Omega \right],
    \end{equation}
    Let us define $\tilde O := \Tr_A [\tilde \Omega]$.
    As $\{\tilde O_{a|x}\}_{a,x}$ is an assemblage of quantum objects in $\mathcal{S}$, also $\tilde O = \sum_a \tilde O_{a|x} \in \mathcal{S}$, which concludes the proof of this direction.

    We now prove the implication $(1) \Rightarrow (2)$.
    Assume the assemblage $\{\tilde O_{a|x}\}_{a,x} \subseteq \mathcal{S}$ is no-signalling. By Definition~\ref{def:quantum-object}, the elements $\tilde O_{a|x}$ are positive semidefinite operators acting on $\mathcal{H}$.
    Since $\mathcal{S} \subseteq \mathsf{B}(\mathcal{H})$, we can treat the assemblage $\{\tilde O_{a|x}\}_{a,x}$ simply as a no-signalling (non-normalized) assemblage of states. We can therefore invoke Theorem~\ref{thm:simultaneous-purification-states} directly, which guarantees the existence of an auxiliary Hilbert space $\mathcal{H}_A$, an operator $\tilde \Omega = \ketbra{\omega}{\omega} \in \mathsf{B}(\mathcal{H}_A \otimes \mathcal{H})$ for some $\ket{\omega} \in \mathcal{H}_A \otimes \mathcal{H}$, and a POVM $\{M_{a|x}\}_{a,x} \subseteq \mathsf{B}(\mathcal{H}_A)$ such that:
    \begin{equation}
        \tilde O_{a|x} = \Tr_A \left[ \left( M_{a|x} \otimes \mathds{1} \right) \tilde \Omega \right].
    \end{equation}
    Consistent with the proof of Theorem~\ref{thm:simultaneous-purification-states}, one constructive way of choosing $\ket{\omega}$ and $M_{a|x}$ is given by
    \begin{align}
        &\ket{\omega} = \sum_i \ket{i} \otimes \tilde{O}^{1/2} \ket{i}, \\
        &M_{a|x} = \left[ \tilde{O}^{-1/2} \tilde{O}_{a|x} \tilde{O}^{-1/2} \right]^T,
    \end{align}
    where $\{\ket{i}\}_{i}$ constitutes a basis of the auxiliary Hilbert space $\mathcal{H}_A \simeq \mathcal{H}$.
    It remains to show that the purification $\tilde \Omega$ is a valid quantum object of the extended type.
    \begin{enumerate}
        \item Positivity: By Theorem~\ref{thm:simultaneous-purification-states}, $\tilde \Omega$ is a pure state (up to normalization), hence $\tilde \Omega \geq 0$.
        \item Normalization: Summing the decomposition yields $\Tr_A[\tilde \Omega] = \tilde O$. Thus, $\Tr[\tilde \Omega] = \Tr[\tilde O] \leq \gamma$, satisfying the normalization condition.
        \item Structural consistency: The type of the extended object $\tilde \Omega$ is defined by the constraints on the system $\mathcal{H}$, with $\mathcal{H}_A$ acting as an unconstrained auxiliary output. Formally, if $\mathcal{P}$ is the projector for $\mathcal{S}$, the projector for the extended type $\mathcal{S}'$ is $\mathcal{P}'$ as in \eqref{eqn:extended-projective-map}.
        Since $\Tr_A[\tilde \Omega] = \tilde O \in \mathcal{S}$, it holds that $\mathcal{P}(\Tr_A[\tilde \Omega]) = \Tr_A[\tilde \Omega]$, and hence
        \begin{equation}
            \mathcal{P}' ( \tilde \Omega ) = \tilde \Omega + \mathds{1}_A \otimes \mathcal{P} ( \Tr_A [ \tilde \Omega ] ) - \mathds{1}_A \otimes \Tr_A [ \tilde \Omega ] = \tilde \Omega.
        \end{equation}
    \end{enumerate}
    Thus, $\tilde \Omega \in \mathcal{S}'$ is a valid pure quantum object of the required type, completing the proof.
\end{proof}
In the case that the quantum object set $\mathcal{S}$ was describing quantum channels, super-channels, or quantum combs, the extended type $\mathcal{S}'$ from Theorem~\ref{thm:simultaneous-purification-quantum-object} corresponds to adding an additional, unconstrained output state in the auxiliary Hilbert space $\mathcal{H}_A$ to the original type $\mathcal{S}$.

This framework, that systematically allows us to map complicated quantum process to the well-understood case of states, has the advantage 
of providing concrete realizations for the purified objects and auxiliary Hilbert spaces, which were unknown before, even for the simple cases of instruments and super-instruments.

\section{Further communication scenarios}\label{app:example-comm-scenarios}
Figure~\ref{fig:multi-round-process-matrix} depicts non-signalling decompositions and correlations arising in a bipartite communication scenario in which the two players are allowed to send messages in both directions, and can make use of internal quantum memory.
The players are modelled by no-signalling assemblages of quantum combs, whose number of slots corresponds to the number of exchanged messages.
The network in this scenario, \emph{i.e.}, the way in which the two players exchange messages, is allowed to be of indefinite causal order, and is therefore mathematically captured by a multi-round process matrix~\cite{Hoffreumon_2021}.
All correlations arising in such scenario are guaranteed to admit a quantum Bell nonlocal realization, as follows immediately from the decomposition proved in Appendix~\ref{app:SGHJW-proofs} and depicted in Fig.~\ref{fig:multi-round-process-matrix}.
By the arguments from Appendix~\ref{app:SGHJW-proofs}, this scenario can be further generalized to the -- admittedly more difficult to represent graphically -- situations of more than two participating players and scenarios in which the messages exchanged by any player follow no definite causal order, requiring us to model the players as non-signalling assemblages of multi-round process matrices as well.

\end{document}